\newfont{\mycrnotice}{ptmr8t at 7pt}
\newfont{\myconfname}{ptmri8t at 7pt}
\setlist[itemize]{leftmargin=1.5em, itemsep=1pt, topsep=1pt}
\setlist[enumerate]{leftmargin=1.5em, itemsep=1pt, topsep=1pt}
\newtheorem{theorem}{Theorem}[section]
\newtheorem{proposition}[theorem]{Proposition}
\newtheorem{claim}[theorem]{Claim}
\newtheorem{definition}[theorem]{Definition}
\newtheorem{corollary}[theorem]{Corollary}
\newtheorem{problem}[theorem]{Problem}
\newtheorem{lemma}[theorem]{Lemma}
\theoremstyle{definition}
\newtheorem{example}[theorem]{Example}
\newcommand{\sch}{\mathbf{S}} 
\newcommand{\dsch}{\mathbf{D}} 
\renewcommand{\phi}{\varphi}
\newcommand{\adom}[1]{\mathrm{adom}(#1)}
\newcommand{\ex}[2]{[\![#1]\!]^{#2}}	
\newcommand{\exs}[2]{ext(#1,#2)} 
\newcommand{\concepts}{{\cal C}}
\newcommand{\sontoshort}{({\cal C},\sqsubseteq,ext)}
\newcommand{\sontology}[3]{(#1,#2,#3)}
\newcommand{\tup}{(a_1,\ldots,a_m)}
\newcommand{\das}{\sch}
\newcommand{\vsch}{\mathbf{V}}
\newcommand{\tuple}[1]{\overline{#1}}
\newcommand{\inst}[1]{\mathrm{Inst}(#1)}
\newcommand{\const}{\mathbf{Const}}
\newcommand{\nom}[1]{\{#1\}}
\newcommand{\obda}{\mathcal{B}}
\newcommand{\bspec}{(\mathcal{T},\sch,\mathcal{M})}
\newcommand{\wninstance}{(\das,I,q,Ans,\tuple{a})}
\newcommand{\naivealgo}{\textsc{Exhaustive Search Algorithm}\xspace}
\newcommand{\greedyalgo}{\textsc{Incremental Search Algorithm}\xspace}
\newcommand{\greedyalgosel}{\textsc{Incremental Search Algorithm with Selections}\xspace}
\newcommand{\dllite}{\textit{DL-Lite}$_\mathcal{R}$ }
\newcommand{\dlliteshort}{\textit{DL-Lite}$_\mathcal{R}$}
\newcommand {\CQ}{\mathsf{CQ}}
\newcommand{\UCQ}{\textrm{UCQ}}
\newcommand{\tupp}[1]{\langle #1\rangle}
\newcommand{\ontocal}{\mathcal{O}}
\newcommand{\calK}{\mathcal{K}}
\newcommand{\LWmin}{L_\das^{\mathrm{min}}}
\newcommand{\dwhynotp}{\textsc{Existence-of-explanation}\xspace}
\newcommand{\whynotp}{\textsc{Compute-one-MGE}\xspace}
\newcommand{\mgep}{\textsc{Check-MGE}\xspace}
\newcommand{\dwhynotproblem}[1]{\textsc{Existence-of-explanation w.r.t.~#1}\xspace}
\newcommand{\whynotproblem}[1]{\textsc{Compute-one-MGE w.r.t.~#1}\xspace}
\newcommand{\mgeproblem}[1]{\textsc{Check-MGE w.r.t.~#1}\xspace}
\newcounter{tbsnr}
\newenvironment{tbs}
{\addtocounter{tbsnr}{1}\par\bigskip \noindent\fbox{\thetbsnr}
\hspace{2mm}\begin{minipage}{.9\linewidth}\tt \small}
{\end{minipage}\hspace*{\fill}\bigskip}
\newcommand{\ACz}{\textsc{AC}$^0$\xspace}
\newcommand{\PTIME}{\textsc{PTime}\xspace}
\newcommand{\EXPTIME}{\textsc{ExpTime}\xspace}
\newcommand{\coNEXPTIME}{\textsc{coNExpTime}\xspace}
\newcommand{\NP}{\textsc{NP}\xspace}
\newcommand{\piptwo}{\Pi^P_2}
\newcommand{\lubsigma}[1]{\textsf{lub}^\sigma_I(#1)}
\declaretheoremstyle[bodyfont=\normalfont\itshape]{thstyle}
\begin{document}

\title{High-Level Why-Not Explanations using Ontologies}

\numberofauthors{4}
\author{
\hspace*{-1em}
Balder ten Cate\\
	\hspace*{-1em}\affaddr{LogicBlox and UCSC}\\
	\hspace*{-1em}\email{balder.tencate@gmail.com}
\and \hspace*{-1em}
Cristina Civili\\
	\hspace*{-1em}\affaddr{Sapienza Univ. of Rome}\\
	\hspace*{-1em}\email{civili@dis.uniroma1.it}
\and \hspace*{-1em}
 Evgeny Sherkhonov\\
 	\hspace*{-1em}\affaddr{Univ. of Amsterdam}\\
	\hspace*{-1em}\email{e.sherkhonov@uva.nl}
\and \hspace*{-1em}
Wang-Chiew Tan\\
	\hspace*{-0.5em}\affaddr{UCSC}\\
	\hspace*{-0.5em}\email{tan@cs.ucsc.edu}
}
\date{}

\maketitle

\begin{abstract}

We propose a novel foundational framework for 
\emph{why-not explanations}, that is, 
explanations for why a tuple is missing from a query result.
Our why-not explanations leverage concepts from an 
ontology to provide high-level and meaningful reasons for why a 
tuple is missing from the result of a query. 

A key algorithmic problem in our framework is that of
\emph{computing a most-general explanation} for a why-not question,
relative to an ontology, which can either be provided by the user, or
it may be automatically derived from the data and/or schema.
We study the complexity of this problem and associated problems,
and present concrete algorithms for computing why-not explanations.  
In the case where an external ontology is provided,
we first show that the problem of deciding the existence of an explanation 
to a why-not question is
\NP-complete in general. However, the problem is solvable in polynomial time for queries
of bounded arity, provided that the ontology is specified in a 
suitable language, such as a member of the DL-Lite family of description logics, which allows for 
efficient concept subsumption checking.
Furthermore, we show that a most-general explanation can be computed
in polynomial time in this case.
In addition, we propose a method for deriving a suitable (virtual) ontology 
from a database and/or a schema, 
and we present an algorithm for computing 
a most-general explanation to a why-not question, relative 
to such ontologies.
This algorithm runs in polynomial-time 
in the case when concepts are defined in a selection-free language,
or if the underlying schema is fixed. 
Finally, we also study the problem 
of computing \emph{short} most-general
explanations, and we briefly discuss alternative definitions of
what it means to be an explanation, and to be most general.

\end{abstract}

\vspace{-1mm}
\category{H.2}{Database Management}{}

\vspace{-1mm}
\terms{Theory, Algorithms}

\vspace{-1mm}
\keywords{Databases; Why-Not Explanations;  Provenance; Ontologies}

\section{Introduction and Results}
\label{sec:Introduction}

An increasing number of databases are derived, extracted,
or curated from disparate data sources. 
Consequently, it becomes more and more important to provide 
data consumers with 
mechanisms that will allow them to gain an understanding of   
data that they are confronted with.
An essential functionality towards this goal
is the capability to provide meaningful explanations about why
data is present or missing form the result of a query.
Explanations help data consumers gauge how much trust one
can place on the result.
Perhaps more importantly, they provide useful information for 
debugging the query or data that led to 
incorrect results. 

This is particularly the case in scenarios where complex data analysis tasks are 
specified through large collections of nested views 
(i.e., views that may be defined in terms of other views). 
For example, schemas with nested view definitions and integrity constraints
capture the core of LogiQL~\cite{LogiQL-PODS,LogiQL14,LogiQL}
(where view definitions may, in general, involve not only relational operations, but also
aggregation, machine learning and 
mathematical optimization tasks). LogiQL is
a language developed and used at LogicBlox \cite{LogicBlox-SIGMOD}
for developing data-intensive ``self service'' applications involving
 complex data analytics workflows. 
 Similar recent industrial systems include 
 Datomic\footnote{\url{www.datomic.com}} and Google's
Yedalog~\cite{Yedalog}.
In each of these systems, nested view definitions (or, Datalog programs)
are used to specify complex workflows
to drive data-analytics tasks. Explanations for unexpected query results
(such as an unexpected tuple or a missing tuple) 
are very useful in such settings, since the source of an error can 
be particularly hard to track. 

There has been considerable research on the topic of 
deriving explanations for why a tuple belongs to the output
of a query.
Early systems were developed in~\cite{ARRSS93, ST90} to 
provide explanations for answers to logic programs in the context
of a deductive database.
The presence of a tuple in the output is explained by 
enumerating 
all possible derivations, that is, instantiations of the logic rules that derive the answer tuple.
In~\cite{ST90}, the system also explains missing answers, by providing  a 
partially instantiated rule, 
based on the missing tuple, and leaving the user to figure out 
how the rest of the rule would have to be instantiated.
In the last decade or so, there has been significant efforts to 
characterize different notions of provenance (or lineage) of query 
answers (see, e.g.,~\cite{CCT09, semirings}) which 
can also be applied to understand why an answer is in the query result.\looseness=-1

There have
also been extensive studies on the {\em why-not problem} (e.g., more recent studies include \cite{BWSDN15, CJ09, HCDN08, HHT09, MGMS11, TC10}). The why-not problem is
the problem of explaining why an answer is missing 
from the output.
Since~\cite{ST90}, the {\em why-not problem} 
was also studied in~\cite{HHT09, HCDN08} in the context of 
debugging results of data extracted via select-project-join queries, and, subsequently, 
a larger class of queries that also includes union and aggregation operators.
Unlike~\cite{ST90} which is geared towards providing explanations for answers
and missing answers,
the goal in~\cite{HCDN08} is to propose modifications to underlying
database $I$, yielding another database $I'$ based on the provenance of
the missing tuple, constraints, and trust specification at hand, so that 
the missing tuple appears in the result of the same query 
$q$ over the updated database $I'$.
In contrast to the {\em data-centric} approach of updating the 
database to derive the missing answer, another 
line of research~\cite{BHT14,CJ09,TC10} 
follows a {\em query-centric} approach whereby 
the query $q$ at hand is modified to $q'$ (without modifying the underlying database)
so that the missing answer appears in the output of $q'(I)$.

\smallskip\noindent\textbf{A new take on why-not questions: }
In this paper, we develop a novel foundational framework 
for why-not explanations
that is principally different from prior approaches.
Our approach 
is neither
data-centric nor query-centric. Instead, we derive high-level explanations via 
an ontology that is either provided, or is derived from the data or schema. 
Our immediate goal is not to compute repairs of the underlying database or query so 
that the missing answer would
appear in the result.
Rather, as in~\cite{ST90}, our primary goal is to provide understandable explanations
for why an answer is missing from the query result. 
As we will illustrate, 
explanations that are based on an ontology  have the potential
to be high-level and provide meaningful insight to why a tuple is
missing from the result.  
This is because an ontology abstracts a domain in terms of concepts and relationships
amongst concepts. Hence, explanations that are based on concepts and relationships
from an ontology 
will embody such high-level abstractions. 
As we shall describe, our work considers two cases. The first is 
when an ontology is provided
externally, in which case explanations will embody external knowledge about the domain.
The second is when
an ontology is not provided. For the latter, we allow an ontology to be derived
from the schema, and hence
explanations will embody knowledge about the domain through concepts and relationships
that are defined
over the schema.\looseness=-1

Formally, an explanation for why
a tuple $\tuple{a}$ is not among the results of a query $q(I)$,  in
our framework, is a tuple of concepts from the ontology whose
extension includes the missing tuple $\tuple{a}$ and, at the same
time, does not include any tuples from $q(I)$.
For example, a query may ask for all products that each store has in
stock, in the form of (product ID, store ID) pairs, from the database
of a large retail company.
A user may then ask why is the pair \emph{(P0034, S012)} not among the
result of the query.
Suppose {\em P0034} refers to a bluetooth headset product and
{\em S012} refers to a particular store in San Francisco.
If {\em P0034} is an instance of a concept {\em bluetooth headsets} 
and {\em S012} is an instance of a concept {\em stores in San Francisco},
and suppose that no pair $(x,y)$, where $x$ is an instance of {\em bluetooth headset} 
and $y$ is an instance of {\em stores in San Francisco}, belongs to the 
query result.
Then the pair of concepts ({\em bluetooth headset}, {\em stores in San Francisco})
is an explanation for the given why-not question. Intuitively, it signifies
the fact that 
``{\em none of the stores in San Francisco has any bluetooth headsets on stock}''.

There may be
multiple explanations for a given why-not question.
In the above example, this would be the case if, for instance, 
{\em S012} belongs also to a more general concept {\em stores in California},
and that none of the stores in California have bluetooth headsets on stock.
Our goal is to
compute a {\em most-general explanation}, that is, an explanation
that is not strictly subsumed by any other explanation.
We study the complexity of computing a most-general explanation
to a why-not question. Formally, we define a  
{\em why-not instance} (or, {\em why-not question}) to be 
a quintuple $(\sch, I, q, \mbox{\em Ans}, \tuple{a})$ 
where
$\sch$ is a {\em schema}, which may include
integrity constraints; $I$ is an instance of $\sch$; $q$ is a query over $\sch$; {\em Ans} = $q(I)$; and
$\tuple{a}\not\in q(I)$.

As mentioned earlier, 
a particular scenario where why-not questions easily arise is when querying
schemas that include a large collection of views, 
and where each view may be nested, that is, 
defined in terms of other views. 
Our framework captures this setting, since view definitions 
can be expressed by means of constraints. 

Our framework supports a very general notion of an ontology, which we call
{\em $\sch$-ontologies}. For a given relational schema $\sch$, an $\sch$-ontology is 
a triple $(\concepts, \sqsubseteq, ext)$ which defines the
set of concepts, the subsumption relationship between concepts, and
respectively, the extension of each concept w.r.t. an instance of the
schema $\sch$. 
We use this general notion of an $\sch$-ontology 
to formalize the key notions of {\em explanations} and {\em most-general explanations},
and we show that $\sch$-ontologies capture 
two different types of ontologies.

The first type of ontologies we consider are 
those that are defined externally, provided that there is a way to associate
the concepts in the externally defined ontology to the instance at hand.
For example, the ontology may be represented in the form of a
Ontology-Based Data Access (OBDA) specification~\cite{PLCDL2008}.
More precisely, an OBDA 
specification consists of a set of concepts and subsumption relation 
specified by means of a description logic 
terminology,
and a set of {\em mapping assertions} that relates the concepts to a relational 
database schema at hand.
Every OBDA specification induces a corresponding $\sch$-ontology.
If the
concepts and subsumption relation are defined by a TBox in a tractable description 
logic such as \dlliteshort, and  the mapping assertions
are Global-As-View (GAV) assertions,  the induced $\sch$-ontology can in fact
be computed from the OBDA specification in polynomial time.
We then present an
algorithm for computing all most-general explanations to a
why-not question, given  an external $\sch$-ontology.
The algorithm 
runs in polynomial time when the arity of the query is bounded, 
and it executes in exponential time in general.
We show that the exponential running time
is unavoidable, unless P$=$NP, 
because
the problem of deciding whether or not there exists an explanation 
to a why-not question given an external $\sch$-ontology is \NP-complete in general.

The second type of ontologies that we 
consider are ontologies
that are derived either (a) from
a  schema $\sch$, or (b) from an instance of the schema. 
In both cases, the concepts of the ontology 
are defined through concept expressions in a suitable language $L_\sch$ that
we develop. Specifically, our concepts are obtained from the relations in the schema,
through selections, projections, and intersection.
The difference between the two cases lies in the 
way the subsumption relation $\sqsubseteq$ is defined.
In the former, a concept $C$ is considered to be subsumed in another concept $C'$ 
if the extension of $C$ is contained in the extension of $C'$ over all instances
of the schema.  For the latter, subsumption is considered to hold
if the extension of $C$ is contained in the extension of $C'$ with respect to 
the given instance
of the schema.
The $\sch$-ontology induced by a schema $\sch$, or instance $I$,
denoted $\ontocal_\sch$ or $\ontocal_I$, respectively, is 
typically
infinite, and is not intended to be materialized. Instead, we present 
an algorithm for directly computing a most-general explanation with respect
to $\ontocal_I$. The algorithm runs in exponential time in general. However,
if the schema is of bounded arity, the algorithm runs in
polynomial time. 
As for computing most-general explanations with respect to $\ontocal_\sch$, 
we identify
 restrictions on the integrity constraints under
which the problem is decidable, and we present complexity upper bounds for
these cases. 

\smallskip
\noindent
{\bf More related work: }
The use of ontologies to facilitate access to databases is not new.
A prominent example is OBDA, where
queries are either posed directly against an ontology, or 
an ontology is used to enrich a data schema against which queries 
are posed with additional 
relations (namely, the concepts from the ontology) \cite{BtCLW, PLCDL2008}.
Answers are computed based on an open-world assumption and 
using the mapping assertions and ontology provided by the OBDA specification.
As we described above, we make use of OBDA specifications as a means to 
specify an external ontology and with a database instance through mapping assertions.
However, unlike in OBDA, we consider queries  posed against 
 a database instance under 
the traditional closed-world semantics, and the ontology is used only to 
derive why-not explanations.

The problems of providing why explanations and why-not explanations have also been
investigated in the context of OBDA 
in \cite{BorgidaCR08} and \cite{CalvaneseOSS13}, respectively.
The why-not explanations of~\cite{CalvaneseOSS13} follow the {\em data-centric} 
approach to why-not 
provenance as we discussed earlier where their goal is to modify the assertions that describe the 
extensions of concepts in the ontology so that the missing tuple will appear in the query result.

There has also been prior work on extracting ontologies from data.
For example, in \cite{lubyte}, the authors considered heuristics to automatically
generate
an ontology from a relational database by defining project-join 
queries over the data. 
Other examples on ontology extraction from data include
publishing relational data as RDF graphs or statements
(e.g., D2RQ~\cite{Bizer04}, Triplify~\cite{Aueretal2009}).
We emphasize that our goal is not to extract and materialize ontologies, but rather,
to use an ontology that is derived from data to compute why-not explanations.

\smallskip\noindent\textbf{Outline: }
After the preliminaries, in Section~\ref{sec:TheWhyNotProvenanceProblem}
we present our framework for why-not explanations. In 
Section~\ref{sec:ontology} we discuss in detail 
the two ways of obtaining an $\sch$-ontology.
In Section~\ref{sec:AlgorithmsForWhyNotExplanations} we present our
main algorithmic results. Finally, in Section~\ref{sec:VariationsOfTheFramework},
we study variatations of our framework, including the problem of 
producing \emph{short} most-general explanations, and alternative notions
of \emph{explanation}, and of what it means to be \emph{most general}.

\section{Preliminaries}
\label{sec:Preliminaries}

A \emph{schema} is a pair $(\sch, \Sigma)$, where $\sch$ is a set $\{R_1, \ldots, R_n\}$ of relation names,
where each relation name has an associated arity, and $\Sigma$ is a set of first-order sentences over $\sch$, which we will refer as \emph{integrity constraints}. Abusing the notation, we will write $\sch$ for the schema $(\sch, \Sigma)$.
A \emph{fact} 
is an expression of the form $R(b_1, \ldots, b_k)$ where $R\in\sch$ is a relation 
of arity $k$, and for $1\leq i\leq k$, we have $b_i\in \const$, where $\const$ is
a countably infinite set of constants. We assume a dense linear order $<$ on $\const$. 
An \emph{attribute} $A$ of an $k$-ary relation name $R\in\sch$ is
a number $i$ such that $1\leq i\leq k$. 
For a fact $R(\tuple{b})$ where $\tuple{b}=b_1, \ldots, b_k$, we sometimes
write
$\pi_{A_1,...,A_k}(\tuple{b})$ to mean the tuple consisting of the
$A_1$th, ..., $A_k$th constants in the tuple $\tuple{b}$, that is, 
the value 
$(b_{A_1}, \ldots, b_{A_k})$.
An \emph{atom over $\sch$} is an expression $R(x_1, \ldots, x_n)$, where $R\in \sch$ and every $x_i, i\in \{1, \ldots, n\}$ is a variable or a constant. 

A \emph{database instance}, or simply an \emph{instance}, $I$ over
$\sch$ is a set of facts over $\sch$ satisfying the integrity constraints $\Sigma$. Equivalently, an instance $I$ is
a map that assigns to each $k$-ary relation name $R\in\sch$ a finite
set of $k$-tuples over $\const$ such that the integrity constraints are satisfied. By $R^{I}$ we denote the set of these tuples. 
We write $\inst{\sch}$ to 
denote the set of all database instances over $\sch$, and $\adom{I}$
to denote the active domain of $I$, i.e., the set of all constants
occurring in facts of $I$.

\smallskip
\noindent
{\bf Queries~}
A \emph{conjunctive query} (CQ) over $\sch$ is a query of the form
$\exists \tuple{y}.\phi(\tuple{x},\tuple{y})$ where
$\phi$ is a conjunction of atoms over $\sch$.
Given an instance $I$ and a CQ $q$, we write $q(I)$ to denote 
the set of answers of $q$ over $I$. 
In this paper we allow conjunctive queries containing comparisons to constants, that is, comparisons of the form $x\,\mathtt{op}\,c$, where $\mathtt{op}\in\{=, <, > , \leq, \geq\}$ and $c\in \const$. We show that all upper bounds hold for the case of CQs with such comparisons, and all lower bounds hold without the use of comparisons (unless explicitly specified otherwise). We do \emph{not} allow comparisons between variables. 

\smallskip
\noindent {\bf Integrity constraints~}
In this paper we consider different classes of integrity 
constraints, including functional dependencies and inclusion dependencies.
We also consider $\UCQ$-view definitions and nested $\UCQ$-view definitions,
which can be expressed using integrity constraints as well. 

A \emph{functional dependency} (FD) on a relation $R\in \sch$ is an expression of the 
form $R:\ X\to Y$
where $X$ and $Y$ are subsets of the set of attributes of $R$. 
We say that an instance $I$ over $\sch$ satisfies the FD if 
for every $\tuple{a}_1$ and $\tuple{a}_2$ from $R^I$ if $\pi_A(\tuple{a}_1) = \pi_A(\tuple{a}_2)$ for every $A\in X$, then $\pi_B(\tuple{a}_1)=\pi_B(\tuple{a}_2)$ for every $B\in Y$.

An \emph{inclusion dependency} (ID) is an expression of the form 
\[
R[A_1, \ldots, A_n] \subseteq S[B_1, \ldots, B_n]
\]
where $R, S\in \sch$, each $A_i$ and $B_j$ is an attribute of $R$ and $S$ respectively. 
We say that an instance $I$ over $\sch$ satisfies the ID if 
\[
\{\pi_{A_1,...,A_n}(\tuple{a}) \mid \tuple{a}\in R^I\} \subseteq \{\pi_{B_1,...,B_n}(\tuple{b}) \mid \tuple{b}\in S^I\}. 
\]

Note that functional and integrity constraints can equivalently be written as first-order sentences~\cite{abit:95}.

\smallskip
\noindent{\bf View Definitions}
To simplify presentation, we treat view defintions as a 
special case of integrity constraints. 

A set of integrity constraints $\Sigma$ over $\sch$ is said to be a \emph{collection of $\UCQ$-view definitions} if there exists a partition $\sch = \dsch \cup \vsch$ such that for every $P\in \vsch$, $\Sigma$ contains exactly one first-order sentence of the form:
\[
P(\bar{x}) \leftrightarrow \vee^k_{i=1} \phi_i (\bar{x}), \eqno(*)
\]
where each $\phi_i$ is a conjunctive query (with comparisons to constants) over $\dsch$. 

Similarly, a set of integrity constraints $\Sigma$ over $\sch$ is said to be a \emph{collection of nested $\UCQ$-view definitions} if there exists a partition $\sch = \dsch \cup \vsch$ such that for every $P\in \vsch$, $\Sigma$ contains exactly one first-order sentence of the form (*), where each $\phi_i$ is now allowed to be a conjunctive query over $\dsch\cup\vsch$, but subject to the following acyclicity condition. Let us say that $P\in \vsch$ \emph{depends on} $R\in \vsch$, if $R$ occurs in the view definition of $P$, that is, in the sentence of $\Sigma$ that is of the form (*) for $P$. We require that the ``depends on'' relation is acyclic.
If, in the view definition of every $P\in\vsch$, each disjunct $\phi_i$ contains at most one atom over $\vsch$,  then we say that $\Sigma$ is a collection of \emph{linearly} nested $\UCQ$-view definitions.

Note that a collection of nested $\UCQ$-view definitions (in the absence of comparisons) can be equivalently viewed as a non-recursive Datalog program and vice versa~\cite{bene:impa10}. In particular, a collection of linearly nested $\UCQ$-view definitions corresponds to a linear non-recursive Datalog program.

\begin{example}
\label{ex:dws}
As an example of a schema, consider $\sch = \dsch \cup \vsch$ with the integrity constraints in Figure~\ref{fig:example-dataworkspace}.
An instance $I$ of the schema $\sch$ is given in Figure~\ref{fig:train-instance}.
\hfill$\Box$
\end{example}

\begin{figure}[t]
\scriptsize
\par\noindent

$\begin{array}{@{}lll@{}}
\multicolumn{3}{@{}l} {\text{Data schema } \dsch}: \\  \\
\{ \text{Cities(name, population, country, continent)}, \\
\text{ \ \ Train-Connections(city\_{}from, city\_{}to)} \} \\ 
\end{array}$

\medskip

$\begin{array}{@{}lll@{}}
\multicolumn{3}{@{}l} 
{\text{View schema } \vsch}: \\ \\
\{ \text{BigCity(name), EuropeanCountry(name)}, \\
\text{ \ \ Reachable(city\_from, city\_to)}\}\\
\end{array}$

\medskip

\begin{tabular}{@{}lll@{}}
$\UCQ$-view definitions: \\ \\
BigCity($x$) &  $\leftrightarrow$ & Cities($x$,$y$,$z$,$w$) $\wedge$ $y \geq 5000000$ \\
EuropeanCountry($z$)& $\leftrightarrow$  & Cities($x$,$y$,$ z$,$w$) $\wedge$ $w = \text{Europe}$ \\
Reachable($x$,$y$) & $\leftrightarrow$  & Train-Connections($x$,$y$)  $\vee$ \\
& & (Train-Connections($x$,$z$) $\wedge $ Train-Connections ($z$,$y$)) \\
\end{tabular}

\medskip

$\begin{array}{@{}lll@{}}
\multicolumn{3}{@{}l}{\text{Functional and inclusion dependencies:}} \\ \\
\text{country}& \rightarrow & \text{ continent }\\
\text{BigCity[name]} & \subseteq & \text{Train-Connections[city\_from]} \\
\text{Train-Connections[city\_from]} & \subseteq & \text{Cities[name]} \\
\text{Train-Connections[city\_to]} & \subseteq &\text{Cities[name]} 
\end{array}$

\caption{Example of a schema $\sch$.}
\label{fig:example-dataworkspace}
\end{figure}

\begin{figure}[t]
\scriptsize
\begin{tabular}{@{}ll@{}}
\textbf{Cities} & \textbf{Train-Connections}\\
\begin{tabular}[t]{|@{\,}l@{\,}|@{\,}l@{\,}|@{\,}l@{\,}|@{\,}l@{\,}|}
 \hline
   \textbf{name} & \textbf{population} & \textbf{country} & \textbf{continent} \\
	\hline
	\hline
   Amsterdam & 779,808 & Netherlands & Europe \\
   Berlin & 3,502,000 & Germany & Europe \\
   Rome & 2,753,000 & Italy & Europe \\
   New York & 8,337,000 & USA & N.America \\
   San Francisco & 837,442 & USA & N.America \\
   Santa Cruz & 59,946 & USA & N.America \\
   Tokyo & 13,185, 000 &Japan & Asia \\
   Kyoto & 1,400,000 & Japan & Asia \\
		 \hline
\end{tabular} &

\begin{tabular}[t]{|@{\,}l@{\,}|@{\,}l@{\,}|}
 \hline
	\textbf{city\_{}from} & \textbf{city\_{}to} \\
	\hline
	\hline
	Amsterdam & Berlin \\
	Berlin & Rome \\
	Berlin & Amsterdam \\
	New York & San Francisco \\
	San Francisco & Santa Cruz \\
	Tokyo & Kyoto \\
	 \hline
\end{tabular}
\end{tabular}

\bigskip
\begin{tabular}{@{}ll@{               \ \ }ll@{}}
\textbf{BigCity} & \textbf{EuropeanCountry} & \textbf{Reachable}\\
\begin{tabular}[t]{|@{\,}l@{\,}|}
 \hline
	\textbf{name}  \\
	\hline
	\hline
	New York \\
	Tokyo \\
	 \hline
\end{tabular}&
\begin{tabular}[t]{|@{\,}l@{\,}|}
 \hline
	\textbf{name}  \\
	\hline
	\hline
	 Netherlands \\
	 Germany \\
	Italy \\
	 \hline
\end{tabular} &

\begin{tabular}[t]{|@{\,}l@{\,}|@{\,}l@{\,}|}
 \hline
	\textbf{city\_{}from} & \textbf{city\_{}to} \\
	\hline
	\hline
	Amsterdam & Berlin \\
	Berlin & Rome \\
	Berlin & Amsterdam \\
	New York & San Francisco \\
	San Francisco & Santa Cruz \\
	Tokyo & Kyoto \\
	Amsterdam & Rome \\
	Amsterdam & Amsterdam \\
	Berlin & Berlin \\
	New York & Santa Cruz \\
	 \hline
\end{tabular}
\end{tabular}
\caption{Example of an instance $I$ of $\sch$.
} 
\label{fig:train-instance}
\end{figure}

\section{Why-Not Explanations}
\label{sec:TheWhyNotProvenanceProblem}

\label{sec:Framework}

Next, we introduce our ontology-based framework for explaining why a 
tuple is not in the output of a query.
Our framework is based on a general
notion of an ontology.
As we shall describe in Section~\ref{sec:ontology}, 
the ontology that is used may be an external ontology (for example, 
an existing ontology specified in a description logic),
or it may be an ontology that is derived from 
a schema.
Both are a special case of 
our general definition of an $\sch$-ontology.

\begin{definition} [$\sch$-ontology] An {\em $\sch$-ontology} over a relational schema 
$\sch$ is a triple $\mathcal{O}=\sontoshort$, where
\begin{itemize}
\item $\concepts$ is a possibly infinite set, whose elements are called {\em concepts},
\item $\sqsubseteq$ is a pre-order (i.e., a reflexive and transitive binary relation) on $\concepts$, called the {\em subsumption relation}, and
\item $ext: \concepts \times \inst{\sch} \rightarrow \wp(\const)$ is a polynomial-time computable function that 
will be used to identify instances of a concept in a given database 
instance ($\wp(\const)$ denotes the powerset of $\const$). 
\end{itemize} 
More precisely, we assume that $ext$ is specified by a Turing machine that,
given $C\in \concepts$, $I\in \inst{\sch}$ and $c\in\const$, 
decides in polynomial time if $c\in ext(C,I)$.

A database instance $I\in\inst{\sch}$ is \emph{consistent} with $\cal O$
if, for all $C_1, C_2\in \concepts$ with $C_1 \sqsubseteq C_2$, we
have  $ext(C_1, I) \subseteq ext(C_2, I)$. \looseness=-1
\end{definition}

An example of an $\sch$-ontology $\ontocal = \sontoshort$ is 
shown in Figure~\ref{fig:toy-ontology}, where the concept subsumption
relation $\sqsubseteq$ is depicted by means of a Hasse diagram. 
Note that, in this example, 
$ext(C,I)$ is independent of the database instance $I$
(and, as a consequence, every $\sch$-instance is consistent with $\cal O$). 
In general,
this is not the case (for example, the extension of a concept may be
determined through mapping assertions,
cf.~Section~\ref{sec:OntologyBasedDataAccessScenario}).

We define our notion of an ontology-based explanation next.

\begin{definition}[Explanation]
\label{dfn:explanation}
Let ${\cal O}=\sontoshort$ be an $\sch$-ontology, $I$ an $\sch$-instance consistent with ${\cal O}$.
Let 
$q$ be an $m$-ary query over $\sch$, 
and $\tuple{a}=\tup$ a tuple of constants 
such that $\tuple{a}\not\in q(I)$. 
Then a tuple of concepts $(C_1, \ldots, C_m)$ from ${\cal C}^m$ is called 
an \emph{explanation} for  $\tuple{a}\not \in q(I)$ with respect to $\ontocal$ (or
an {\em explanation} in short) if:
\begin{itemize}
\item for every $1\leq i\leq m$, $a_i\in \exs{C_i}{I}$, and
\item $(\exs{C_1}{I}\times \ldots \times \exs{C_m}{I})\cap q(I) =\emptyset$.
\end{itemize}
\end{definition}

In other words, an explanation is a tuple of concepts whose 
extension includes the missing tuple $\tuple{a}$ (and thus explains $\tuple{a}$)
but, at the same time, it does not include any tuple in $q(I)$ (and thus
does not explain any tuple in $q(I)$). Intuitively, the 
tuple of concepts is an explanation 
that is orthogonal to existing tuples in $q(I)$ but 
relevant for the missing tuple $\tuple{a}$, and thus forms an explanation for
why $\tuple{a}$ is not in $q(I)$. There can be multiple explanations in general
and
the ``best'' explanations are the ones that are the most general.

\begin{definition}[Most-general explanation]
Let ${\cal O}=\sontoshort$ be an $\sch$-ontology, and let  
$E=(C_1, \ldots, C_m)$ and $E'=(C'_1, \ldots, C'_m)$ be two 
tuples of concepts from $\concepts^m$.
\begin{itemize}
\item
We say that $E$ is \emph{less general} than $E'$ with respect to $\cal O$, 
denoted as $E\leq_{\cal O} E'$, if $C_i\sqsubseteq C_i'$ for every $i, 1\leq i\leq m$.
\item
We say that $E$ is \emph{strictly less general} 
than $E'$ with respect to $\cal O$, denoted as $E<_{\cal O} E'$, 
if $E\leq_{\cal O} E'$, and $E' \not \leq_{\cal O} E$.
\item
We say that $E$ is a {\em most-general explanation} 
for $\tuple{a}\not\in q(I)$ if $E$ is an explanation 
for $\tuple{a}\not\in q(I)$, and there is no 
explanation $E'$ for $\tuple{a}\not\in q(I)$ such that $E'>_{\cal O} E$.
\end{itemize}
\end{definition}

As we will formally define in Section~\ref{sec:AlgorithmsForWhyNotExplanations}, 
a {\em why-not problem} asks the question: ``why is the tuple $\tup$ not 
in the output of a query $q$ over an instance $I$ of schema $\das$?''
The following example illustrates the notions of explanations and most-general explanations
in the context of a why-not problem.

\begin{example}
\label{ex:first-big-example}
Consider the instance $I_\dsch$ 
of the relational schema $\sch$ = \{Cities(name, population, country, continent),
Train-Connections(city\_{}from, city\_{}to)\}
shown in Figure~\ref{fig:train-instance}. 

Suppose $q$ is the query
$\exists z.$ Train-Connections($x,z$) $\wedge$ Train-Connections($z,y$). 
That is, the query asks for all pairs of cities that are 
connected via a city.
Then 
$q(I)$ returns tuples  $\{\tupp{\text{Amsterdam, Rome}}, \tupp{\text{Amsterdam, Amsterdam}},  \tupp{\text{Berlin, Berlin}},$
$ \tupp{\text{New York, Santa Cruz}}\}$. 
A user may ask why is the tuple
$\langle$Amsterdam, New York$\rangle$ not in the result of $q(I)$ 
(i.e., why is
$\langle$Amsterdam, New York$\rangle \not\in q(I)$?). 
Based on the $\sch$-ontology
defined in 
Figure~\ref{fig:toy-ontology},
we can derive the following 
explanations for $\langle$Amsterdam, New York$\rangle \notin q(I)$ :

\[\small\begin{array}{lll}
E_1 = \langle$Dutch-City, East-Coast-City$\rangle\\
E_2 = \langle$Dutch-City, US-City$\rangle\\
E_3 = \langle$European-City, East-Coast-City$\rangle\\
E_4 = \langle$European-City, US-City$\rangle
\end{array}\]
$E_1$ is the simplest explanation, i.e., the one we can build by looking at the lower level of the hierarchy in our $\sch$-ontology. Each subsequent explanation is more general than at least one of the prior explanations
w.r.t. to the $\sch$-ontology.
In particular, we have
$E_4 >_{\cal O} E_2 >_{\cal O} E_1$, and $E_4 >_{\cal O} E_3 >_{\cal O} E_1$. 
Thus, the most-general explanation for why 
$\langle$Amsterdam, New York$\rangle \not\in q(I)$ with respect to our $\sch$-ontology
is $E_4$, which intuitively informs that the reason is 
because Amsterdam is a city in Europe while New York is a city in the US (and hence, they
are not connected by train).
Note that all the other possible combinations of concepts are not explanations 
because they intersect with $q(I)$. 
\hfill$\Box$
\end{example}

\begin{figure}[t!]
\begin{center}
{\small
\Tree [.City [.European-City Dutch-City ] [.US-City [.East-Coast-City ] [.West-Coast-City  ] ] ] 
}
 \end{center}
\[\small\begin{array}{lll}
 ext(\text{City},I) &=& \{\text{Amsterdam, Berlin, Rome, New York,}\\&&\text{~~ San Francisco, Santa Cruz, Tokyo, Kyoto}\} \\
 ext(\text{European-City},I) &=& \{\text{Amsterdam, Berlin, Rome}\} \\
 ext(\text{Dutch-City},I) &=& \{\text{Amsterdam}\} \\
 ext(\text{US-City},I) &=& \{\text{New York, San Francisco, Santa Cruz}\} \\
 ext(\text{East-Coast-City},I) &=& \{\text{New York}\} \\
 ext(\text{West-Coast-City},I) &=& \{\text{Santa Cruz, San Francisco}\} \\
 \end{array}\]
\vspace{-1mm}
\caption{Example ontology.}
\label{fig:toy-ontology}
\vspace{-3mm}
\end{figure}

As we will see in Example~\ref{ex:explanationsWithDerived},
there may be more than one most-general explanations in general.

Generalizing the above example, we can informally define the
problem of explaining why-not questions via ontologies as 
follows:
{\em
given an instance $I$ of schema $\sch$, a query $q$ over $\sch$, an $\sch$-ontology $\cal O$ (consistent with $I$)
and a tuple $\tuple{a}\not \in q(I)$, compute a most-general explanation
for $\tuple{a}\not \in q(I)$, if one exists, w.r.t. $\cal O$.
}
As we shall describe in Section~\ref{sec:AlgorithmsForWhyNotExplanations}, in addition to the 
above problem of 
computing one most-general explanation, we will also investigate the corresponding decision problem that asks whether or not an explanation for a why-not problem exists, and whether 
or not a given tuple of concepts is a most-general explanation for a why-not problem. 
In our framework, the $\sch$-ontology $\cal O$ may be given explicitly as part of the input,
or it may be derived from a given database instance or a given schema. We will introduce 
the different
scenarios by which an ontology may be obtained in the next section, before we 
describe our algorithms for computing most-general explanations in Section~\ref{sec:AlgorithmsForWhyNotExplanations}.

\section{Obtaining Ontologies}\label{sec:ontology}

In this section we discuss two approaches by which
$\sch$-ontologies may be obtained. 
The first approach allows one to 
leverage 
an external ontology, provided that there is a way to relate a concept in the
ontology to a database instance. 
In this case, the set $\concepts$ of concepts is specified through a description logic
such as $\cal ALC$ or \textit{DL-Lite}; 
$\sqsubseteq$ is a partial order on the concepts defined in the ontology, and the
function {\em ext} may be given through \emph{mapping assertions}.
The second approach considers an $\sch$-ontology that is 
derived from a specific database instance, or from
a schema. 
This approach is useful as it allows
one to define an ontology to be used for explaining why-not questions in the absence
of an external ontology.

In either case, we
study the complexity of deriving such $\sch$-ontologies 
based on the language on
which concepts are defined, the subsumption between concepts, and the function $ext$,
which is defined according to the semantics of the concept language.

\subsection{Leveraging an external ontology}
\label{sec:OntologyBasedDataAccessScenario}

We first consider the case where we are given an external ontology
that models the domain of the database instance, and a relationship
between the ontology and the instance.
We will illustrate in particular how \emph{description logic
ontologies} are captured as
a special case of our framework.  

In what follows, our exposition borrows notions from the
Ontology-Based Data Access (OBDA) framework. 
Specifically, we will make crucial use of the notion
of an \emph{OBDA specification} \cite{DiPinto2013}, which consists of
a description logic ontology, a relational schema, and a collection of
mapping assertions.  To keep the exposition simple, we restrict our
discussion to one particular description logic, called \dlliteshort, which
is a representative member of the \emph{DL-Lite} family of description
logics~\cite{calv2007}.
\dllite is the basis for the OWL 2
QL\footnote{\small\url{http://www.w3.org/TR/owl2-profiles/\#OWL\_2\_QL}} profile of  OWL 2, which is a standard
ontology language for Semantic Web adopted by W3C.
As the other languages in the
\emph{DL-Lite} family, \dllite exhibits a good trade off between expressivity and complexity bounds for
important reasoning tasks such as subsumption checking, instance
checking and query answering.  

\medskip\par\noindent\textbf{TBox and Mapping Assertions.}
In the description logic literature, an ontology is typically formalized as a TBox (Terminology Box),
which consists of finitely many \emph{TBox axioms}, where each
TBox axiom 
expresses a relationship between concepts. 
Alongside TBoxes, ABoxes (Assertion Boxes) are sometimes used to 
describe the extension of concepts.
To simplify  the presentation, we do not
consider ABoxes here. 

\begin{definition}[\dlliteshort]
Fix a finite set $\Phi_C$ of \emph{``atomic concepts''} and a finite set $\Phi_R$ of \emph{``atomic roles''}. 
\begin{itemize}
\item
The 
\emph{concept expressions} and \emph{role expressions} of  
\dllite are defined as follows:
\begin{center}\begin{tabular}{ll}
Basic concept expression: & $B ::= A \mid \exists R$ \\
Basic role expression: & $R ::=  P \mid  P^-$\\
Concept expressions: & $C ::= B \mid \neg B$ \\
Role expressions & $E ::= R \mid \neg R $  \\
\end{tabular}
\end{center}
 where $A\in \Phi_C$ and $P\in\Phi_R$. 
 Formally, a
 \emph{$(\Phi_C,\Phi_R)$-interpretation} $\mathcal{I}$ is a map
 that assigns to every atomic concept in $\Phi_C$ a unary relation
 over $\const$ and to every atomic role in $\Phi_R$ a binary
 relation over $\const$.  The  map $\mathcal{I}$  naturally extends to
 arbitrary concept expressions and role expressions:
 \begin{center}
 \small
 \begin{tabular}{@{}ll@{}}
 $\mathcal{I}(P^-) = \{(x,y)\mid(y,x)\in \mathcal{I}(P)\}$ & $\mathcal{I}(\exists P) = \pi_1(\mathcal{I}(P))$ \\
 $\mathcal{I}(\neg P) = \const^2\setminus \mathcal{I}(P)$ &
$\mathcal{I}(\neg A) = \const\setminus \mathcal{I}(A)$ 
\end{tabular}
\normalsize
\end{center}
 Observe that 
$\mathcal{I}(\exists P^-) = \pi_2(\mathcal{I}(P))$.

\item
 A \emph{TBox (Terminology Box)} is a finite set of \emph{TBox axioms}
 where each TBox axiom is an inclusion assertion of the form $B
 \sqsubseteq C$ or $R \sqsubseteq E$, where $B$ is a basic concept
 expression, $C$ is a concept expression, $R$ is a basic role
 expression and $E$ is a role expression.  An
 $(\Phi_C,\Phi_R)$-interpretation $\mathcal{I}$ \emph{satisfies} a
 TBox if for each axiom $X \sqsubseteq Y$, it holds $\mathcal{I}(X)
 \subseteq \mathcal{I}(Y)$.

\item
 For concept expressions $C_1, C_2$ and  a TBox $\mathcal{T}$, we say that \emph{$C_1$ is subsumed by $C_2$ relative to $\mathcal{T}$} (notation: $\mathcal{T}\models C_1\sqsubseteq C_2$) if, for all interpretations $\mathcal{I}$ satisfying $\mathcal{T}$, we have that $\mathcal{I}(C_1)\subseteq \mathcal{I}(C_2)$.
 \end{itemize}
\end{definition}

\begin{figure}
\scriptsize
\par\noindent
\begin{tabular} {@{}ll@{}}
DL-Lite TBox axiom & (first-order translation) \\ \\
EU-City $\sqsubseteq$ City                   & $\forall x ~ \text{EU-City}(x)\to\text{City}(x)$ \\
Dutch-City $\sqsubseteq$ EU-City  & $\forall x~ \text{Dutch-City}(x) \to \text{EU-City}(x)$ \\
N.A.-City $\sqsubseteq$ City               & $\forall x ~ \text{N.A.-City}(x)\to\text{City}(x)$ \\
EU-City $\sqsubseteq \neg$ N.A.-City       & $\forall x ~ \text{EU-City}(x)\to\neg\text{N.A.-City}(x)$ \\
US-City  $\sqsubseteq $ N.A.-City       & $\forall x ~ \text{US-City}(x)\to\text{N.A.-City}(x)$ \\
City $\sqsubseteq \exists$ hasCountry        & $\forall x ~ \text{City}(x)\to\exists y ~ \text{hasCountry}(x,y)$\\
Country $\sqsubseteq \exists$ hasContinent   & $\forall x ~ \text{Country}(x)\to\exists y ~ \text{hasContinent}(x,y)$ \\
$\exists$hasCountry$^-$ $\sqsubseteq$ Country & $\forall x ~ (\exists y ~\text{hasCountry}(y,x)) \to \text{Country}(x) $  \\
$\exists$hasContinent$^-$ $\sqsubseteq$ Continent & $\forall x ~ (\exists y ~\text{hasContinent}(y,x)) \to \text{Continent}(x)$  \\
$\exists$connected $\sqsubseteq$ City       & $\forall x ~ (\exists y ~\text{connected}(x,y)) \to \text{City}(x)$ \\
$\exists$connected$^-$ $\sqsubseteq$ City   & $\forall x ~ (\exists y ~\text{connected}(y,x)) \to \text{City}(x)$ \\
\end{tabular}

\bigskip

$\begin{array}{@{}lll@{}}
\multicolumn{3}{@{}l}{\text{GAV mapping assertions (universal quantifiers omitted for readability):}} \\ \\
\text{Cities}(x,z,w,\text{``Europe''}) & \rightarrow & \text{EU-City($x$)}   \\
\text{Cities}(x,z,\text{``Netherlands''},w) & \rightarrow & \text{Dutch-City($x$)}   \\
\text{Cities}(x,z,w,\text{``N.America''}) & \rightarrow & \text{N.A.-City($x$)}\\
\text{Cities}(x,z,\text{``USA''},w) & \rightarrow & \text{US-City($x$)}   \\
\text{Cities}(x,y,z,w) & \rightarrow & \text{Continent($w$)} \\
\text{Cities}(x,k,y,w) & \rightarrow & \text{hasCountry(x,y)} \\
\text{Cities}(x,k,w,y) & \rightarrow & \text{hasContinent($x$,$y$)} \\
\text{Train-Connection}(x,y), & \\
~~\text{Cities}(x,x_1,x_2,x_3), \text{Cities}(y,y_1,y_2,y_3) & \rightarrow  & \text{connected($x$,$y$)} 
\end{array}$

\caption{Example DL-Lite ontology with mapping assertions.}
\label{fig:example-DL-Lite-ontology}
\end{figure}

An example of a \dllite TBox is given at the top of Figure~\ref{fig:example-DL-Lite-ontology}. 
For convenience, we have listed next to each TBox axiom, its equivalent semantics in first-order notation.

Next we describe what mapping assertions are.
Given an ontology and a relational schema, we can specify mapping assertions to relate
the ontology language to the relational schema, which is similar to how mappings
are used in OBDA \cite{PLCDL2008}.  
In general, mapping assertions are first order sentences over the schema
$\sch\cup\Phi_C\cup\Phi_R$ that express relationships between the
symbols in $\sch$ and those in $\Phi_C$ and $\Phi_R$.  
Among the different
schema mapping languages that can be used, we
restrict
our attention, for simplicity, 
to the class of {\em Global-As-View (GAV) mapping assertions} ({\em GAV mapping assertions or 
GAV constraints} or \emph{GAV source-to-target tgds}). 

\begin{definition}[GAV mapping assertions]
A GAV mapping assertion over $(\sch, (\Phi_C\cup\Phi_R))$ is a first-order sentence $\psi$ of the form 
\[ \forall \vec{x}\,\, (\phi_1(\vec{x_1}),\cdots,\phi_n(\vec{x_n}))\to\psi(\vec{x})\]
where $\vec{x} \subseteq \vec{x_1} \cup \ldots \cup \vec{x_n}$, 
$\phi_1, \ldots, \phi_n$ are atoms over $\sch$ and $\psi$ is an atomic formula
of the form $A(x_i)$ (for $A\in\Phi_C$) or $P(x_i,x_j)$ (for $P\in\Phi_R$). 
Let  $I$ be an $\sch$-instance and $\mathcal{I}$ an $(\Phi_C,\Phi_R)$-interpretation. 
We say that the pair $(I, \mathcal{I})$ \emph{satisfies} the GAV mapping assertion (notation: $(I,\mathcal{I})\models\psi$) if it holds that for any tuple of elements $\bar{a}$ from $adom(I)$, with $\bar{a}=\bigcup_{1\leq k\leq n}\bar{a_k}$, if $I\models \phi_1(\bar{a_1}),\ldots,\phi_n(\bar{a_n})$, then $a_i \in \mathcal{I} (A)$, with $a_i\in\bar{a}$ (if $\psi = A(x_i)$) or $(a_i, a_j)\in \mathcal{I}(P)$, with $a_i,a_j\in\bar{a}$ (if $\psi = P(x_i, x_j)$).
\end{definition}

Intuitively, a GAV mapping assertion associates a conjunctive query over $\bf S$ to 
an element (concept or atomic role) of the ontology.
A set of GAV mapping assertions associates, in general, a union of conjunctive queries
to an element of the ontology.
Examples of GAV mapping assertions are given at the bottom of 
Figure~\ref{fig:example-DL-Lite-ontology}. 

\medskip\par\noindent\textbf{OBDA induced ontologies}

\begin{definition}[OBDA specification]
Let $\mathcal{T}$ be a TBox, $\sch$ a relational schema,
and $\mathcal{M}$ a set of mapping assertions from $\sch$ to the concepts of $\mathcal{T}$. 
We call the triple $\obda = \bspec$ an {\em OBDA specification}. 

An $(\Phi_C,\Phi_R)$-interpretation $\mathcal{I}$ is said to be a \emph{solution} for 
an $\sch$-instance $I$ with respect to the OBDA specification $\obda$ if 
the pair $(I,\mathcal{I})$  satisfies all mapping assertions in $\mathcal{M}$ and $\mathcal{I}$ satisfies $\mathcal{T}$.
\end{definition}

Note that our notion of an OBDA specification is a special case of the
one given in \cite{DiPinto2013}, where we do not consider view
inclusion dependencies.
Also, as mentioned earlier, our OBDA specifications in this paper assume
that 
$\cal T$ is a \dllite TBox and $\cal M$ is a set of GAV mappings.
These restrictions allow us to achieve
good complexity bounds for explaining why-not questions with ontologies.
In particular, it is not hard to see that, for the OBDA specifications 
we consider, every $\sch$-instance $I$  has a solution.

\begin{restatable}{thm}{dllitecomp} \label{dl-lite-complexity}
(\cite{calv2007,PLCDL2008})
Let $\mathcal{T}$ be a \dllite TBox.
\begin{enumerate}
\item There is a \textsc{PTime}-algorithm for deciding subsumption. That is,
given $\mathcal{T}$ and two concepts $C_1,C_2$, decide if
$\mathcal{T} \models C_1 \sqsubseteq C_2$.
\item There is an algorithm that, given an OBDA specification $\mathcal{B}$, an instance $I$ over $\sch$ and a
     concept $C$, computes $certain(C,I,\mathcal{B}) = \bigcap\{\mathcal{I}(C) \mid \mathcal{I} \text{ is a solution for } I  \text{ w.r.t. } \mathcal{B}\}$.
     For a fixed OBDA specification, the algorithm runs in \PTIME (\ACz in data complexity).
\end{enumerate}
\end{restatable}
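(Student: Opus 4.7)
My plan is to treat the two parts separately, sketching the standard techniques from the cited references (Calvanese et al.~\cite{calv2007} and Poggi et al.~\cite{PLCDL2008}) rather than rederiving them from scratch.

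For part (1), I would exploit the fact that in \dlliteshort{} the set of \emph{basic} concept/role expressions constructible from $\mathcal{T}$'s signature is finite and of size $O(|\mathcal{T}|)$: the basic concepts are the atomic concepts $A \in \Phi_C$ together with expressions $\exists R$ for each basic role $R = P$ or $R = P^-$ with $P \in \Phi_R$. I would then saturate $\mathcal{T}$ under the positive-inclusion rules of \dlliteshort{} (reflexivity/transitivity of $\sqsubseteq$, together with the monotonicity rules such as $R \sqsubseteq R' \Rightarrow \exists R \sqsubseteq \exists R'$ and the symmetric rule for $R^-$), producing a closure of polynomial size that can be computed in polynomial time by iterating these rules to fixpoint. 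Subsumption $\mathcal{T} \models C_1 \sqsubseteq C_2$ between basic concepts is then a lookup in this closure; subsumption involving negative concepts/roles is reduced to the positive case via the contrapositives, and inconsistency of the TBox (which trivializes subsumption) is checked by testing, for each negative inclusion $B_1 \sqsubseteq \neg B_2$ in $\mathcal{T}$, whether some basic concept is subsumed by both $B_1$ and $B_2$ in the positive closure.

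For part (2), my plan is to invoke the two-step query answering pipeline developed for OBDA over \dlliteshort{} with GAV mappings. First, view $C$ as a unary query $q_C(x)$ over the ontology vocabulary and apply the \textsc{PerfectRef} rewriting w.r.t.~$\mathcal{T}$: this produces a union of conjunctive queries $q_C^{\mathcal{T}}$ over $\Phi_C \cup \Phi_R$ such that, for every $(\Phi_C,\Phi_R)$-interpretation $\mathcal{I}$ satisfying $\mathcal{T}$, the certain answers of $q_C$ coincide with $q_C^{\mathcal{T}}(\mathcal{I})$. Second, \emph{unfold} $q_C^{\mathcal{T}}$ through the GAV mapping assertions $\mathcal{M}$: each ontology atom $A(x)$ (resp.~$P(x,y)$) in a disjunct of $q_C^{\mathcal{T}}$ is replaced by the disjunction of bodies of mapping rules whose head is $A$ (resp.~$P$), yielding a UCQ $q_C^{\mathcal{T},\mathcal{M}}$ over the relational schema $\sch$. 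Evaluating $q_C^{\mathcal{T},\mathcal{M}}$ on $I$ returns exactly $certain(C,I,\mathcal{B})$. For fixed $\mathcal{B}=(\mathcal{T},\sch,\mathcal{M})$, both the rewriting and the unfolding produce a UCQ of size depending only on $\mathcal{B}$, so the overall procedure is polynomial-time in $|I|$, and because UCQ evaluation is in \ACz in data complexity, the claimed \ACz bound follows.

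The step that requires the most care is justifying soundness and completeness of the \textsc{PerfectRef} rewriting relative to the certain answers semantics: the correctness proof in~\cite{calv2007} proceeds by constructing a universal (canonical) model of the ontology whose certain answers to any UCQ coincide with those obtained by the rewriting. Since this is exactly the argument worked out in the cited references, I would reference it rather than reproduce it; the remaining termination and polynomial-size claims are straightforward combinatorial observations about the rewriting and unfolding procedures.
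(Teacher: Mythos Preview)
Your proposal is correct and follows essentially the same approach as the paper: both parts are treated as consequences of the cited references, with Part~2 handled via the standard \textsc{PerfectRef} rewriting followed by unfolding through the GAV mappings and UCQ evaluation over $I$. The only difference is that the paper's proof of Part~1 is a one-line citation to~\cite{calv2007}, whereas you additionally sketch the saturation procedure underlying that result; this is compatible with, but more detailed than, what the paper actually does.
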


Every OBDA specification induces an $\sch$-ontology as follows.

\begin{definition}
Every OBDA specification $\obda = \bspec$ where $\mathcal{T}$ is a \dllite TBox and $\mathcal{M}$ is a set of GAV mappings gives rise to an $\sch$-ontology 
where:
\begin{itemize}
   \item ${\cal C}_{\ontocal_\obda}$ is the set of all basic concept expressions occurring in $\mathcal{T}$;
   \item $\sqsubseteq_{\ontocal_\obda} = \{(C_1,C_2)\mid \mathcal{T}\models C_1\sqsubseteq C_2\}$
   \item $ext_{\ontocal_\obda}$ is the polynomial-time computable function given by $ext_{\ontocal_\obda}(C,I) = \bigcap\{\mathcal{I}(C)\mid\text{$\mathcal{I}$ is a solution for $I$ w.r.t. $\mathcal{B}$}\}$ 
\end{itemize}
\end{definition}

Note that the fact that $ext_{\ontocal_\obda}$ is the polynomial-time computable follows from Theorem~\ref{dl-lite-complexity}.

We remarked earlier that, for the ODBA specifications $\obda$ that we consider, it holds
that every input instance has a solution. It follows that every input instance $I$
is consistent with the corresponding $\sch$-ontology $\ontocal_\obda$. 

\begin{restatable}{thm}{dllitegavcomp} \label{dl-lite-gav-complexity}The $\sch$-ontology
$\ontocal_\obda = ({\cal C}_{\ontocal_\obda},\sqsubseteq_{\ontocal_\obda},ext_{\ontocal_\obda})$ can be computed from a given 
OBDA specification $\obda = \bspec$ in \PTIME if $\mathcal{T}$ is a \dllite TBox and $\mathcal{M}$ is a set of GAV mappings.
 \end{restatable}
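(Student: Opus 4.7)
The plan is to compute each of the three components of $\ontocal_\obda = ({\cal C}_{\ontocal_\obda},\sqsubseteq_{\ontocal_\obda},ext_{\ontocal_\obda})$ separately, in each case leveraging the known tractability results for \dlliteshort recorded in Theorem~\ref{dl-lite-complexity}. First, I would obtain ${\cal C}_{\ontocal_\obda}$ by a single syntactic scan of $\mathcal{T}$, collecting every basic concept expression (i.e., every $A\in\Phi_C$ and every $\exists R$ for a basic role expression $R$) that appears in some axiom of $\mathcal{T}$. This clearly runs in linear time and produces at most $O(|\mathcal{T}|)$ concepts.

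Next, to produce $\sqsubseteq_{\ontocal_\obda}$ explicitly as a set of pairs, I would iterate over all $(C_1,C_2)\in {\cal C}_{\ontocal_\obda}\times {\cal C}_{\ontocal_\obda}$ and, for each pair, invoke the \PTIME subsumption algorithm from Theorem~\ref{dl-lite-complexity}(1) to decide whether $\mathcal{T}\models C_1\sqsubseteq C_2$. Since $|{\cal C}_{\ontocal_\obda}|$ is polynomial in $|\mathcal{T}|$, this yields the full subsumption relation in overall polynomial time. For the extension function $ext_{\ontocal_\obda}$, recall that the definition of an $\sch$-ontology only requires us to output a specification (i.e., a Turing machine) that, on input $(C,I,c)$, decides $c\in ext_{\ontocal_\obda}(C,I)$ in polynomial time. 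Theorem~\ref{dl-lite-complexity}(2) supplies exactly such an algorithm for $certain(C,I,\obda)$, which by definition equals $ext_{\ontocal_\obda}(C,I)$; hence I would simply emit a description of this algorithm with $\mathcal{T}$ and $\mathcal{M}$ hard-coded into it, which can be produced in time polynomial in $|\obda|$.

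The proof is thus largely a packaging argument on top of Theorem~\ref{dl-lite-complexity}, and I do not expect any single step to be a genuine obstacle. The only subtlety worth double-checking is the interface between the two notions of ``polynomial time'': Theorem~\ref{dl-lite-complexity}(2) gives combined-complexity \PTIME only for \emph{fixed} OBDA specifications, so I would verify that, when $\mathcal{B}$ varies with the input, the total time to produce a description of the membership-testing Turing machine (and not to run it on a particular $(C,I,c)$) is still polynomial in $|\obda|$; this follows because the algorithm from Theorem~\ref{dl-lite-complexity}(2) is uniform in $\mathcal{B}$ and can be specified by a fixed meta-procedure that takes $(\mathcal{T},\mathcal{M})$ as part of the input.
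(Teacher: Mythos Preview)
Your proposal is correct and follows essentially the same approach as the paper: both arguments observe that ${\cal C}_{\ontocal_\obda}$ has polynomial size in $|\mathcal{T}|$ and then invoke Theorem~\ref{dl-lite-complexity}(1) and~(2) to handle $\sqsubseteq_{\ontocal_\obda}$ and $ext_{\ontocal_\obda}$, respectively. If anything, you are more careful than the paper in distinguishing the specification of $ext_{\ontocal_\obda}$ (as a Turing machine) from its evaluation on a particular $(C,I,c)$, and in flagging the combined-versus-data-complexity distinction in Theorem~\ref{dl-lite-complexity}(2); the paper's proof simply asserts that computing the extension of each concept is in \PTIME without spelling out this interface.
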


We are now ready to illustrate an example where a why-not question is explained
via an 
external ontology.

\begin{example}\label{example-obda}
Consider the OBDA specification $\obda =
(\mathcal{T},\sch,\mathcal{M})$ where $\mathcal{T}$ is the TBox
consisting of the \dllite axioms given in
Figure~\ref{fig:example-DL-Lite-ontology}, $\sch$ is the schema from
Example~\ref{ex:first-big-example}, and $\mathcal{M}$ is the set of
mapping assertions given in Figure~\ref{fig:example-DL-Lite-ontology}.
These together induce an $\sch$-ontology $\ontocal_\obda =
({\cal C}_{\ontocal_\obda},\sqsubseteq_{\ontocal_\obda},ext_{\ontocal_\obda})$.
The set ${\cal C}_{\ontocal_\obda}$ consists of the following basic concept expressions:

\smallskip
\small
City, EU-City, N.A.-City, Dutch-City,  \\
\indent
US-City, Country, Continent,\\
\indent
$\exists$ hasCountry, $\exists$ hasCountry$^-$, $\exists$ hasContinent, \\
\indent
$\exists$ hasContinent$^-$, $\exists$ connected, $\exists$ connected$^-$.
\normalsize
\smallskip

The set $\sqsubseteq_{\ontocal_\obda}$ includes the pairs of concepts of the TBox $\cal T$ given in 
Figure~\ref{fig:example-DL-Lite-ontology}.
We use the mappings
to compute the extension of each concept in ${\cal C}_{\ontocal_\obda}$ using the instance $I$
on the left of  Figure~\ref{fig:train-instance}. We list a few extensions here:
\[\small\begin{array}{lll}
 ext_{\ontocal_\obda}(\text{City},I) &=& \{\text{Amsterdam, Berlin, Rome, New York,}\\&&\text{ San Francisco, Santa Cruz, Tokyo, Kyoto}\} \\
 ext_{\ontocal_\obda}(\text{EU-City},I) &=& \{\text{Amsterdam, Berlin, Rome}\} \\
 ext_{\ontocal_\obda}(\text{N.A.-City},I) &=& \{\text{New York, San Francisco, Santa Cruz}\} \\
 ext_{\ontocal_\obda}(\exists \text{hasCountry$^-$},I) &=& \{\text{Netherlands, Germany, Italy, USA, Japan}\} \\
 ext_{\ontocal_\obda}(\exists \text{connected},I) &=& \{\text{Amsterdam, Berlin, New York}\} \\
\end{array}\]

Now consider the query 
$q(x,y)$ = $\exists z.$ Train-Connections($x,z$) $\wedge$ Train-Connections($z,y$), 
and $q(I)$ as in 
Example~\ref{ex:first-big-example}. 
As before, we would like to explain why is 
$\langle$Amsterdam, New York$\rangle \not\in q(I)$. 
This time, we use the induced $\sch$-ontology $\ontocal_\obda$ described
above to derive explanations 
for $\langle$Amsterdam, New York$\rangle \notin q(I)$:

\smallskip
\small
\centering
\begin{tabular}{ll@{~~~~~~~~~~}ll}
$E_1$ &= $\langle$EU-City, N.A.-City$\rangle$ & 
$E_2$ &= $\langle$Dutch-City,N.A.-City$\rangle$ \\ 
$E_3$ &= $\langle$EU-City, US-City$\rangle$ &
$E_4$ &= $\langle$Dutch-City,US-City$\rangle$ \\
\end{tabular}
\normalsize

\smallskip

Among the four explanations above, $E_1$  is the most general.
\hfill$\Box$
\end{example}

\subsection{Ontologies derived from a schema}
\label{DerivingOntology}

We now move to the second approach 
where an ontology is derived from an instance or a
schema. 
The ability to derive an ontology through an instance or a schema
is useful in the context where
an external ontology is unavailable.
To this purpose we first introduce a simple but suitable concept language
that can be defined over the schema $\sch$. 

Specifically, our concept language, denoted
as 
$L_\sch$, makes use of 
two relational algebra operations, projection ($\pi$) and selection ($\sigma$).  
We first introduce and motivate the language.
We will then describe our complexity results for testing
whether one concept is subsumed by another, and 
for obtaining an ontology from a given instance or a schema.
We will make use of these results later on in Section~\ref{sec:OntologyDerivedFromWorkspaceInstance}
and Section~\ref{sec:OntologyDerivedFromTheWorkspaceSchema}.

\begin{definition}[The Concept Language $L_\sch$]
Let $\sch$ be a schema.
A concept in $L_\sch$ is an expression $C$ defined by the following grammar. 
\begin{center}
$D ::= R \ \mid \ \sigma_{A_1 \mathtt{op}\, c_1, \ldots, A_n \mathtt{op}\, c_n}(R)$\\
\indent
$C := \top \mid \nom{c} \mid \pi_A (D)\  \mid \ C\sqcap C$
\end{center}
In the above, $R$ is a predicate name from $\sch$, $A, A_1,\ldots,A_n$ are 
attributes in $R$, not necessarily distinct, $c, c_1,\ldots,c_n\in\const$, and each occurrence 
of $\mathtt{op}$ is a comparison operator belonging to $\{=, < , > , \leq, \geq \}$. 
For $\mathbf{C}=\{C_1 , \ldots, C_k\}$  a finite set of concepts, 
we denote by $\sqcap \mathbf{C}$ the 
conjunction $C_1 \sqcap \ldots \sqcap C_k$. If $\mathbf{C}$ is empty, we 
take $\sqcap \mathbf{C}$ to be $\top$.

Given a finite set of constants $\mathcal{K}
\subset \const $, we define $L_\das[\calK]$ as
the concept language $L_\das$ whose concept expressions only use
constants from $\mathcal{K}$.  
By {\em selection-free $L_\das$}, we mean the language $L_\das$ where $\sigma$ is not allowed.
Similarly, by {\em intersection-free $L_\das$}, we mean the language $L_\das$ where $\sqcap$
is not allowed, and by $\LWmin$, we mean the minimal concept language
$L_\das$ where both $\sigma$ and $\sqcap$ are not allowed.
\end{definition}

Observe that the $L_\das$ grammar defines a concept in 
the form $C_1 \sqcap \ldots \sqcap C_n$ where each $C_i$ is $\top$ or $\nom{c}$ or
$\pi_A(R)$ or 
$\pi_A(\sigma_{A_1 \mathtt{op}\, c_1, \ldots, A_n \mathtt{op}\, c_n}(R))$.
A concept of the form $\{c\}$ is called a \emph{nominal}. A nominal $\{c\}$
is the ``most specific'' concept for the constant $c$.
Given a tuple $\tuple{a}$ that is not in the output, the corresponding tuple of 
nominal concepts forms a default, albeit trivial, 
explanation for why not $\tuple{a}$.

As our next example illustrates, even though our concept language $L_\das$
appears simple,
it is able to naturally capture many intuitive concepts over 
the domain of the database.

\begin{example}
\label{ex:Lsch-example}
We refer back to our schema $\das$ 
in Figure~\ref{fig:example-dataworkspace}.
Suppose we do not have access to 
an external ontology such as the one given in 
Example~\ref{ex:first-big-example}. 
We show that even so,
we can still construct meaningful concepts 
directly from the database schema
using the concept language described above. 
We list a few semantic concepts that can be specified with $L_\das$ in 
Figure~\ref{fig:concept-examples}, where we also show
the corresponding \textsc{select-from-where} style expressions and intuitive meaning.
\hfill$\Box$
\end{example}

Example~\ref{ex:Lsch-example} shows that, 
even though
$L_\sch$ is a simple language where concepts are 
essentially intersections of unary projections of relations and nominals, it is already
sufficiently expressive to capture natural concepts 
that can be used to build meaningful
explanations. It is worth noting that, for minor extensions of the language
$L_\sch$, such as with $\neq$-comparisons 
and
disjunction, the notion of a most-general explanation 
becomes trivial, in the sense that,
for each why-not question, 
there is a most-general explanation that 
essentially enumerates all tuples in the query answer.

By using $L_\sch$, we are able to define 
an ontology whose atomic concepts are derived
from the schema itself.
This approach allows us to provide explanations
using a vocabulary that is already familiar to the user.
We believe that this leads to intuitive and useful why-not
explanations.

If we view each expression $\pi_A(D)$ as an atomic concept, then 
the language $L_\sch$ corresponds to 
a very simple concept language, whose concepts are built from atomic
concepts and nominals using only intersection. 
In this sense, $L_\sch$ can be considered to be a fragment of  \textit{DL-Lite}$_{core,\sqcap}$ with nominals (also known as \textit{DL-Lite}$_{horn}$~\cite{arta:dllite09}), i.e., 
the description logic obtained by enriching \textit{DL-Lite}$_{core}$ (the simplest
language in the DL-Lite family) 
with conjunction.

The precise semantics of $L_\sch$ is as follows.
Given a concept $C$ that is defined in $L_\das$ and an 
instance $I$ over $\sch$, 
the extension of $C$ in $I$, denoted by $\ex{C}{I}$, is inductively defined below.
Intuitively, the extension of $C$ in $I$
is the result of evaluating the 
query associated with $C$ over $I$.
\[\small\begin{array}{@{}l@{\,}l@{\,}l@{}}
\ex{R}{I} &=& R^I \\ 
\ex{\sigma_{A_1 \mathtt{op}_1 c_1, \ldots, A_n \mathtt{op}_n c_n}(R)}{I} &=& \{\bar{b}\in R^I \mid \pi_{A_i}(\bar{b})\mathtt{op}_i c_i , 1\leq i\leq n \} \\
\ex{\top}{I} &=& \const \\
\ex{\nom{c}}{I} &=& \{c\} \\ 
\ex{\pi_A(D)}{I} &=& \pi_A(\ex{D}{I}) \\
\ex{C_1 \sqcap C_2}{I} &=& \ex{C_1}{I} \cap \ex{C_2}{I} 
\end{array}\]

\begin{figure*}
\begin{center}
\small
\begin{tabular}{lll}
{\bf $L_\das$ concept expression} & {\bf \textsc{select-from-where} formulation} & {\bf Intuitive meaning} \\[.5em]
$\pi_\text{name}(\text{Cities})$ & name from Cities & City        \\
$\pi_\text{name}(\sigma_{\text{continent} = \text{``Europe''}}(\text{Cities}))$ & name from Cities where continent=``Europe'' & European City  \\
$\pi_\text{name}(\sigma_{\text{continent} = \text{``N.America''}}(\text{Cities}))$ & name from Cities where continent=``N.America'' & N.American City \\
$\pi_\text{name}(\sigma_{\text{population}>1000000}(\text{Cities}))$ & name from Cities where population>1000000 & Large City\\
$\pi_1(\text{BigCity})$ & name from BigCity &  name of BigCity \\
$\{\text{``Santa Cruz''}\}$ & ``Santa Cruz'' & Santa Cruz  \\
\pbox{20cm}{$\pi_\text{name}(\sigma_{\text{population}<1000000}(\text{Cities}))\sqcap$ \\ $\pi_\text{city\_to}(\sigma_{\text{city\_from}=\text{Amsterdam}}(\text{Reachable}))$ }& \pbox{20cm}{name from Cities where population<1000000 \\ AND city\_from from Reachable where city\_to=Amsterdam} & Small City that is reachable from Amsterdam.
\end{tabular}
\end{center}
\vspace{-3mm}
\caption{Example of concepts specified in $L_\das$.}
\label{fig:concept-examples}
\end{figure*}

The notion of when one concept is subsumed by another
is defined according to the extensions of 
the concepts. There are two notions, corresponding to concept subsumption w.r.t. an instance
or subsumption w.r.t. a schema.
More precisely, given two concepts $C_1,C_2$, 
\begin{itemize}
\item we say that {\em $C_2$ subsumes $C_1$ w.r.t. 
an instance $I$} (notation: $C_1 \sqsubseteq_I C_2$) if $\ex{C_1}{I} \subseteq \ex{C_2}{I}$.

\item we say that {\em $C_2$ subsumes $C_1$ w.r.t. a 
schema} $\das$ (notation: $C_1 \sqsubseteq_{\das} C_2$), if for every instance $I$ of $\das$, we have that 
$C_1 \sqsubseteq_I C_2$.

\end{itemize}

We are now ready to define the two types of ontologies, which are based on the two notions of 
concept subsumption described above, that can be derived from an instance or a schema.

\begin{definition}[Ontologies derived from a schema]
Let $\das$ be a schema, and let $I$ be an instance of
$\das$.  Then the ontologies derived from $\das$ and $I$ are defined
respectively as

\begin{itemize}
\item $\ontocal_\das =  \sontology{L_\das}{\sqsubseteq_\das}{ext}$  and 

\item $\ontocal_I = \sontology{L_\das}{\sqsubseteq_I}{ext}$,  

\end{itemize}
where $ext$ is the function given by $ext(C,I') = \ex{C}{I'}$ 
for all instances $I'$ over $\das$. 
By $\ontocal_\das[\calK]$ we denote the ontology
$\sontology{L_\das[\calK]}{\sqsubseteq_\das}{ext}$,
and by $\ontocal_I[\calK]$ we denote the
ontology
$\sontology{L_\das[\calK]}{\sqsubseteq_I}{ext}$. 
\end{definition}

It is easy to verify 
that the subsumption relations $\sqsubseteq_{\das}$
and $\sqsubseteq_I$ are indeed pre-orders (i.e., reflexive, and
transitive relations), and that, for every fixed 
schemas $\das$, the function
$\ex{C}{I'}$ is polynomial-time computable. Hence, the above
definition is well-defined even though the ontologies obtained in this
way are typically infinite.
From the definition, it is easy to verify that if 
$C_1 \sqsubseteq_\das C_2$, then \mbox{$C_1\sqsubseteq_I C_2$}.

The following result about deciding $\sqsubseteq_I$ is immediate, as one can always execute the queries 
that are associated with the concepts
and then test for subsumption, which can be done in polynomial time.

\begin{restatable}{prop}{subsinstance}\label{subs-instance}
The problem of deciding, given an instance $I$ of a schema $\das$
and given two $L_\das$ concept expressions $C_1$, $C_2$, 
whether $C_1 \sqsubseteq_I C_2$, is in $\PTIME$.
\end{restatable}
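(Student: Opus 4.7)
The plan is to reduce the subsumption check to the direct evaluation of the relational algebra expressions associated with $C_1$ and $C_2$. Every concept in $L_\das$, according to its grammar, is a (possibly singleton) intersection of conjuncts of the form $\top$, $\{c\}$, or $\pi_A(D)$, where $D$ is either a relation $R$ or a selection $\sigma_{A_1\,\mathtt{op}\,c_1,\ldots,A_n\,\mathtt{op}\,c_n}(R)$. Under the semantics $\ex{\cdot}{I}$, each such conjunct corresponds to a standard relational algebra query built from $\pi$, $\sigma$, and $\cap$, and can therefore be evaluated on $I$ in time polynomial in $|C_i|+|I|$. The basic strategy is then: compute $\ex{C_1}{I}$ and $\ex{C_2}{I}$ as finite sets of constants, and test $\ex{C_1}{I}\subseteq \ex{C_2}{I}$ by a straightforward scan.

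The one subtlety is that $\top$ has extension $\const$, which is infinite, so we cannot literally materialize it. The main obstacle is therefore a careful preprocessing step that handles $\top$ uniformly. Since $\top$ is the identity of $\sqcap$, I would first rewrite each $C_i$ by deleting every occurrence of $\top$ that appears as a proper conjunct of a $\sqcap$, collapsing the result to $\top$ only when no other conjunct remains. After this normalization, each $C_i$ is either syntactically $\top$ or an intersection of finitely many conjuncts of the form $\{c\}$ or $\pi_A(D)$, and in the latter case its extension is a finite subset of $\adom{I}\cup\{c : c \text{ occurs in } C_i\}$. Three cases then suffice: if $C_2$ normalizes to $\top$, answer \emph{yes}; if $C_2$ does not but $C_1$ does, answer \emph{no} (since a finite set cannot contain $\const$); otherwise evaluate both extensions explicitly and compare them.

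To justify polynomial running time in the last case, I would observe that for each non-$\top$ conjunct $\pi_A(D)$ of $C_i$, one sweep over $R^I$ suffices to compute $\ex{\pi_A(D)}{I}$, applying the selection predicates pointwise and projecting to attribute $A$; nominals $\{c\}$ contribute a single constant. The overall extension is the intersection of a bounded (by $|C_i|$) number of such sets, each of size at most $|I|$, which is computable in time polynomial in $|C_i|+|I|$. The final containment check is a linear scan. Combining these steps yields an algorithm running in polynomial time in the total input size, establishing that the problem is in \PTIME.
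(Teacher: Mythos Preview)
Your proposal is correct and follows the same approach the paper takes: the paper simply remarks that one can ``execute the queries that are associated with the concepts and then test for subsumption, which can be done in polynomial time.'' Your treatment is more careful than the paper's one-sentence justification, in particular your explicit handling of the $\top$ case (whose extension is the infinite set $\const$) via normalization; the paper glosses over this detail entirely.
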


On the other hand, the complexity of deciding $\sqsubseteq_{\das}$  depends on the type 
of integrity constraints that are used in the specification of $\das$.
Table~\ref{tbl:subsumption-complexity} provides a summary of relevant complexity results.

\newcommand{\csdecision}{\textsc{C-subsumption}}
\begin{table}
\small
\begin{center}
\begin{tabular}{|l|l|}
\multicolumn{1}{l}{Constraints} & 
\multicolumn{1}{l}{Complexity of subsumption for $L_\das$} \\
\hline
$\UCQ$-view def. (no comparisons) & $\NP$-complete \\
$\UCQ$-view def.   &   $\piptwo$-complete  \\ 
linearly nested $\UCQ$-view def. & $\piptwo$-complete \\
nested $\UCQ$-view def. & $\coNEXPTIME$-complete \\
FDs &  in $\PTIME$  \\
IDs & ? (in \PTIME for selection-free $L_\das$) \\
IDs + FDs & Undecidable \\
\hline
\end{tabular}
\end{center}
\normalsize
{All stated lower bounds already hold for $L^{\mathrm{min}}_\das$ concept expressions.
}
\vspace{-4mm}
\caption{Complexity of concept subsumption.}
\label{tbl:subsumption-complexity}
\vspace{-2mm}
\end{table}

\begin{restatable}{thm}{}
\label{th:table}
Let $\mathcal{W}$ be one of the different classes of schemas with integrity constraints
listed in Table~\ref{tbl:subsumption-complexity}. 
The complexity of the problem to decide, given a schema $\das$ 
in $\mathcal{W}$
and two  $L_\das$ concept expressions $C_1$, $C_2$, whether
 $C_1 \sqsubseteq_\das C_2$, is as indicated in 
 the second column of the corresponding row in Table~\ref{tbl:subsumption-complexity}. 
\end{restatable}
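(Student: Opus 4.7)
The plan is to reduce each row of Table~\ref{tbl:subsumption-complexity} to a form of conjunctive-query containment under the relevant class of constraints, and then invoke or adapt classical results. The starting observation is that every $L_\das$ concept $C$ corresponds to a \emph{star-shaped} unary CQ $q_C(x)$: each conjunct $\pi_{A_i}(\sigma_{\theta_i}(R_i))$ contributes one atom $R_i(\bar y_i)$ whose $A_i$-th position is equated with $x$ (with the comparisons $\theta_i$ imposed on the other positions), and each nominal $\nom{c}$ imposes $x = c$. All atoms of $q_C$ share the output variable $x$ and no other variables. Via this translation, $C_1 \sqsubseteq_\das C_2$ coincides with CQ containment of $q_{C_1}$ in $q_{C_2}$ modulo the integrity constraints of $\das$. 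Using $C_2 = \sqcap_j \alpha_j$, the problem decomposes into polynomially many single-atom checks $C_1 \sqsubseteq_\das \alpha_j$, which I would exploit throughout.

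\textbf{Upper bounds for view definitions.} For UCQ views without comparisons, I would unfold view atoms and show that non-containment admits a polynomial-size witness: a choice of one disjunct per view atom of $C_1$, a distinguishing element, and a polynomial justification that some $\alpha_j$ fails on the resulting canonical base instance; verification is polynomial because, by the star-shape, checking $\alpha_j$ amounts to polynomially many local atom look-ups, and the absence of comparisons prevents a further alternation. This yields the $\NP$ upper bound. Adding comparisons raises the verification step by one quantifier alternation, mirroring the classical jump of CQ-with-comparisons containment to $\piptwo$ (Klug; van der Meyden). Linear nesting keeps each unfolding path short and still falls in $\piptwo$, while general nesting produces an exponentially branching unfolding tree and is captured by the coNEXPTIME upper bound for non-recursive Datalog containment.

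\textbf{Upper bounds for dependencies and matching lower bounds.} For FDs alone, I would chase $q_{C_1}$'s canonical instance in polynomial time with the FDs and then verify each atom of $q_{C_2}$ independently---a step that is polynomial precisely because the star-shape ensures atoms of $q_{C_2}$ do not interact except through the shared output variable. For IDs alone in the selection-free fragment, I would argue that a polynomial-depth truncation of the (possibly non-terminating) ID-chase on $q_{C_1}$ suffices: without selection-constants to obstruct unification, freshly generated nulls can only help, not hurt, matching the single atom of each $\alpha_j$. All lower bounds follow from a uniform template: starting from a hard containment instance $q \sqsubseteq q'$, introduce view definitions $V_q, V_{q'}$ whose bodies are $q, q'$ and reduce to the $\LWmin$-subsumption $\pi_1(V_q) \sqsubseteq_\das \pi_1(V_{q'})$. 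This pushes all join structure into the view definitions and leaves the concepts minimal. The sources are classical: CQ containment (Chandra--Merlin) supplies $\NP$-hardness; CQ-with-comparisons containment supplies $\piptwo$-hardness for the UCQ and linearly nested UCQ rows; non-recursive Datalog containment supplies coNEXPTIME-hardness for general nesting; and the classical undecidability of CQ containment under combined FDs and IDs (Mitchell; Chandra--Vardi) supplies the final row.

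\textbf{Main obstacle.} The subtlest step is the $\NP$ upper bound for UCQ views without comparisons: naive unfolding yields an exponentially large UCQ, so the non-containment witness must be a carefully localized choice of disjuncts rather than a full unfolding, and the verification must exploit the star-shape of $C_2$ together with the absence of comparisons to stay in polynomial time. A second delicate point is the selection-free ID case, where the ID-chase need not terminate and one must identify the correct polynomial-depth truncation that still faithfully captures containment.
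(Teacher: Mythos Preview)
Your proposal is essentially correct and follows the same high-level architecture as the paper: translate $L_\das$ concepts to star-shaped unary CQs, decompose $C_2$ into its conjuncts, and reduce each row to the appropriate known CQ/Datalog containment problem; the lower-bound template via $\pi_1(V_q)\sqsubseteq_\das \pi_1(V_{q'})$ is exactly what the paper does, and the FD case via chase likewise matches.

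The one place where the paper's argument differs substantively from yours is the selection-free ID row. You propose a polynomial-depth truncation of the ID-chase and flag this as a delicate point; the paper avoids the chase entirely. It first proves two decomposition lemmas specific to IDs---that $\sqcap_i C_i \sqsubseteq_\das C$ holds iff some single $C_i\sqsubseteq_\das C$ (this uses that a union of ID-models is an ID-model), and the usual right-hand decomposition---reducing selection-free $L_\das$ subsumption to $\LWmin$ subsumption $\pi_A(R)\sqsubseteq_\das\pi_B(S)$. For the latter it builds a \emph{position graph} whose nodes are attribute positions $R[i]$ and whose edges come from the IDs, and shows subsumption is exactly reachability from $R[A]$ to $S[B]$. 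This is cleaner than a chase-truncation argument and sidesteps the need to bound the chase depth; your approach can be made to work (the relevant depth is the number of positions, since the distinguished value can only travel along position-graph edges), but you would still need the left-hand decomposition, or an equivalent observation, to handle conjunctions on the left. A secondary difference: for full $L_\das$ under (linearly) nested UCQ views, the paper does not invoke CQ-with-comparisons containment directly but instead gives a polynomial-time elimination of comparisons by introducing fresh unary predicates $P_{\mathtt{op}\,c}$ and a ``Bad'' predicate encoding order inconsistencies, thereby reducing to comparison-free non-recursive (linear) Datalog containment; your route via Klug/van der Meyden is equally valid and arguably more direct.
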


For example, given two concepts $C_1$, $C_2$, and
a schema $(\sch, \Sigma)$ where $\Sigma$ 
is a collection of nested $\UCQ$-view definitions, the complexity of deciding $C_1 \sqsubseteq_\das C_2$
is $\coNEXPTIME$-complete.  The lower bound already holds for concepts specified
in $L^{\mathrm{min}}_\das$.
We conclude this section with an analysis of the number of 
distinct concepts that can be formulated in a given concept language and an example that illustrates
explanations that can be computed from such derived ontologies.

\begin{restatable}{prop}{sizeconceptlanguage}\label{size-concept-language} Given a schema $\das$ and a finite set of constants $\mathcal{K} \subset \const$, the number of unique concepts (modulo logical equivalence) 
\begin{itemize}
\item  in $\LWmin[\calK]$ is polynomial in the size of $\das$ and $\calK$,
\item in selection-free or intersection-free $L_\das[\calK]$ is single exponential in the size of $\das$ and $\calK$.
\item in $L_\das[\calK]$ is double exponential in the size of $\das$ and $\calK$.
\end{itemize}
\end{restatable}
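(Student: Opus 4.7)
The plan is to obtain the bounds by counting syntactic concept expressions modulo the obvious equivalences of $\sqcap$ (commutativity, associativity, idempotence, and identity with $\top$). Since logical equivalence is coarser than this syntactic equivalence, any upper bound on the number of syntactic equivalence classes yields an upper bound on the number of logical equivalence classes. I will handle the four language fragments in turn, always using the fact that a conjunction of atoms from a pool of size $n$ is determined (modulo ACI and $\top$-identity) by the subset of atoms that actually appear, so the count of conjunctions is at most $2^n$.

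For $\LWmin[\calK]$, neither $\sigma$ nor $\sqcap$ is available, so every concept is $\top$, a nominal $\{c\}$ with $c\in\calK$, or a simple projection $\pi_A(R)$. A direct enumeration gives $1+|\calK|+\sum_{R\in\das}\mathrm{arity}(R)$ concepts, which is polynomial in $|\das|+|\calK|$.

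For intersection-free $L_\das[\calK]$, a concept is either $\top$, a nominal, or of the form $\pi_A(R)$ or $\pi_A(\sigma_\phi(R))$, where $\phi$ is a conjunction of atomic comparisons of the form $A\,\mathtt{op}\,c$ with $\mathtt{op}\in\{=,<,>,\leq,\geq\}$, $A$ an attribute of $R$, and $c\in\calK$. The set of available atomic comparisons has size at most $5\cdot\max_{R}\mathrm{arity}(R)\cdot|\calK|$, hence polynomial, so by the subset argument there are at most $2^{\mathrm{poly}}$ admissible $\phi$. Multiplying by the polynomially many choices of $A$ and $R$ (and adding the outer cases $\top$, $\{c\}$, $\pi_A(R)$) yields a single exponential upper bound. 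For selection-free $L_\das[\calK]$, the atomic building blocks are only $\top$, the nominals, and the projections $\pi_A(R)$, of which there are polynomially many; a concept is a conjunction of these, so again there are at most $2^{\mathrm{poly}}$ equivalence classes, i.e., single exponentially many concepts.

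For full $L_\das[\calK]$, I would combine the two preceding arguments. Take as ``atoms'' the expressions $\top$, the nominals, and the projections $\pi_A(\sigma_\phi(R))$; by the intersection-free analysis these number at most single exponentially many. Every concept of $L_\das[\calK]$ is a conjunction of such atoms and is therefore determined modulo ACI equivalence by the subset of atoms used, which gives the double exponential bound $2^{2^{\mathrm{poly}}}$. The main conceptual point---and the only place where one might worry---is ensuring that logical equivalence does not somehow tighten the bound in a way that invalidates our counting, but this is a non-issue for upper bounds, since collapsing syntactic classes under logical equivalence can only decrease the count. The only mildly delicate routine step is making the ``modulo ACI and $\top$-identity'' reduction precise for nested conjunctions, and for this one can proceed by structural induction, flattening each concept into a canonical form that is a set of atoms.
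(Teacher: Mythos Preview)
Your argument is correct. The paper states this proposition without proof, and your counting argument---enumerating the polynomially many atomic building blocks and then bounding conjunctions by subsets modulo ACI and $\top$-identity---is precisely the straightforward justification the authors have in mind, as evidenced by how the bounds are invoked later (e.g., in the proof of Proposition~\ref{prop:whynotOw}).
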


\begin{example}
\label{ex:explanationsWithDerived}
Let $\das$ and $I$ be the schema and instance from
Figure~\ref{fig:example-dataworkspace} and Figure~\ref{fig:train-instance}. Suppose the concept language $L_\das$
is used to define among others the concepts from
Figure~\ref{fig:concept-examples}. 
The following concept subsumptions can be derived from $\das$. 
Note that subsumption $\sqsubseteq_\das$ implies $\sqsubseteq_I$. 
\[
\small
\begin{array}{lll}
\pi_\text{name}(\sigma_{\text{continent} = \text{``Europe''}}(\text{Cities})) & \sqsubseteq_\das & \pi_\text{name}(\text{Cities}) \\ 
\pi_{\text{name}}(\sigma_{\text{population>7000000}}(\text{Cities})) & \sqsubseteq_\das & \pi_\text{name}(\text{BigCity}) \\
\pi_\text{name}(\text{BigCity}) & \sqsubseteq_\das & \pi_\text{name}(\text{Cities}) \\
\pi_\text{name}(\text{BigCity})  & \sqsubseteq_\das &  \pi_{\text{city\_from}}(\text{Train-Connections}) \\
\end{array}
\normalsize
\]
The first and second subsumptions follow from definitions.
The third one holds because according to $\Pi$, a BigCity is a city with population more than 5 million.
The fourth subsumption follows from the inclusion dependency that 
each BigCity must have a train departing from it. 
There are subsumptions that hold in 
$\ontocal_I$ but not in  $\ontocal_\das$. For instance,
\begin{tabbing}
$\pi_\text{city\_to}(\sigma_{\text{city\_from}=\text{Amsterdam}}(\text{Reachable})) \sqsubseteq_I$ \\
\hspace{4cm}$\pi_\text{city\_to}(\sigma_{\text{city\_from}=\text{Berlin}}(\text{Reachable})),$
\end{tabbing}
\noindent
holds w.r.t. $\ontocal_I$, where $I$ is the instance given in  Figure~\ref{fig:train-instance}, but does 
not hold w.r.t $\ontocal_\das$, since one can construct an instance
where not all cities that are reachable from Amsterdam are reachable from Berlin. 

We now give examples of most-general explanations w.r.t. $\ontocal_\das$ and $\ontocal_I$. 
As before,  let $q(x,y)$ = $\exists z.$ Train-Connections($x,z$) $\wedge$ Train-Connections($z,y$) be a query with
 $q(I)$ = $\{\tupp{\text{Amsterdam, Rome}},\tupp{\text{Amsterdam, Amsterdam}},$
 $ \tupp{\text{Berlin, Berlin}}, \tupp{\text{New York, Santa Cruz }}\}$. 
We would like to explain why 
$\langle$Amsterdam, New York$\rangle \not\in q(I)$ using the derived ontologies $\ontocal_\das$ and $\ontocal_I$. 
Note that if $E$ is an explanation w.r.t. $\ontocal_\das$, then it is also 
an explanation w.r.t. $\ontocal_I$ and vice versa. 
Some possible explanations are: 
\[\small\begin{array}{lll}
E_1 = \langle\pi_{\text{name}}(\sigma_{\text{continent=Europe}}(\text{Cities})), \\
\hspace{2cm}\pi_{\text{city\_from}}(\sigma_{\text{city\_to = San Francisco}}(\text{Train-Connections}))\rangle \\
E_2 = \langle\pi_{\text{name}}(\sigma_{\text{continent=Europe}}(\text{Cities})), \\
\hspace{2cm}\pi_{\text{name}}(\sigma_{\text{continent=N.America}}(\text{Cities}))\rangle \\
E_3 = \langle  \pi_{\text{city\_to}}(\sigma_{\text{city\_from = Berlin}}(\text{Reachable}))  ,  \\
\hspace{2cm} \pi_{\text{city\_from}}(\sigma_{\text{city\_to = Santa Cruz}}(\text{Reachable}))\rangle \\
E_4 = \langle\{\text{Amsterdam}\}, \pi_{\text{name}}(\sigma_{\text{population>7000000}}(\text{Cities})) \rangle \\
E_5 = \langle \pi_{\text{name}}(\sigma_{\text{country=Netherlands}}(\text{Cities})) , \\
\hspace{2cm} \pi_{\text{name}}(\text{BigCity})\sqcap  \pi_{\text{name}}(\sigma_{\text{continent=N.America}}(\text{Cities})) \rangle \\ 
E_6 = \langle \{\text{Amsterdam}\},  \{ \text{New York}\} \rangle \\

E_7= \langle\pi_{\text{name}}(\sigma_{\text{continent=Europe}}(\text{Cities})), \pi_{\text{name}}(\text{BigCity})\} \rangle \\
E_8= \langle\pi_{\text{name}}(\sigma_{\text{continent=Europe}}(\text{Cities})), \\
\hspace{2cm} \pi_{\text{name}}(\sigma_{\text{population>7000000}}(\text{Cities}))\} \rangle \\
\end{array}\]
For example, $E_1$ states the reason is that Amsterdam is a European city and New York is a city that has a train connection to San Francisco, and there is no train connection between such cities via a city.  The trivial explanation $E_6$ is less general than any other explanation 
w.r.t  $\ontocal_\das$ (and  $\ontocal_I$ too). 
It can be verified that
$E_2$ and $E_7$ are most-general explanations w.r.t both $\ontocal_\das$ and $\ontocal_I$.
In particular, $E_2 >_{\ontocal_I} E_5$ and $E_2\geq_{\ontocal_I} E_3$, but
$E_2\not >_{\ontocal_\das} E_5$ and $E_2\not >_{\ontocal_\das} E_3$
since there might be an instance of $\das$ where Netherlands is not in
Europe or where Berlin is reachable from a non-european city.  
\hfill$\Box$ 
\end{example}

In general, if $E$ is an explanation w.r.t. $\ontocal_I$ then
$E$ is also an explanation w.r.t. $\ontocal_\das$, and vice versa. The following proposition also describes the relationship between most-general explanations w.r.t $\ontocal_\das$ and $\ontocal_I$.

\begin{restatable}{prop}{mgerelationships}
Let $\das$ be a schema, and let $I$ be an instance of $\das$. 
\begin{itemize}
\item[(i)] Every explanation w.r.t.~$\ontocal_\das$ is an explanation w.r.t.~$\ontocal_I$ and vice versa.
\item[(ii)] A most-general explanation w.r.t~$\ontocal_\das$ is not necessarily a most-general explanation w.r.t.~$\ontocal_I$, and likewise vice versa.
\end{itemize}
\end{restatable}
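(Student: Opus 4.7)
\textbf{Part (i).} The ontologies $\ontocal_\das = (L_\das, \sqsubseteq_\das, ext)$ and $\ontocal_I = (L_\das, \sqsubseteq_I, ext)$ share the concept set $L_\das$ and the extension function $ext$; they differ only in $\sqsubseteq$. Since Definition~\ref{dfn:explanation} refers only to $ext$ in the two conditions $a_i\in ext(C_i,I)$ and $\prod_i ext(C_i,I)\cap q(I)=\emptyset$, the set of explanations w.r.t.~$\ontocal_\das$ literally coincides with the set of explanations w.r.t.~$\ontocal_I$, which handles both directions of (i).

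\textbf{Part (ii).} The key observation is that $\sqsubseteq_\das \subseteq \sqsubseteq_I$ and that the inclusion is typically strict, so the maximality of a fixed explanation may shift across ontologies. The plan is to exhibit two small counterexamples. For the direction ``$\ontocal_\das$-most-general but not $\ontocal_I$-most-general'', take $\das = \{R(A,B)\}$ without constraints, $I=\{R(a,b), R(c,b), R(d,d)\}$, and $q(x) = R(x,x)$ so that $q(I) = \{d\}$. Then $E=(\{a\})$ is $\ontocal_\das$-most-general because the only $L_\das$-concepts $\sqsupseteq_\das \{a\}$ are $\{a\}$ itself and $\top$, and $\top$ is not an explanation since $d\in ext(\top,I)\cap q(I)$. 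However, $E' = (\pi_1(\sigma_{B=b}(R)))$ is a valid explanation with extension $\{a,c\}$ in $I$, and $E <_{\ontocal_I} E'$, so $E$ is not $\ontocal_I$-most-general.

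For the reverse direction, take $\das = \{R(A,B), S(A)\}$, $I=\{R(a,b), R(d,d), S(d)\}$, and $q(x) = S(x)$ so that $q(I) = \{d\}$. Let $E=(\pi_1(\sigma_{B=b,A=a}(R)))$ and $E'=(\pi_1(\sigma_{B=b}(R)))$; both have extension $\{a\}$ in $I$, giving $E\leq_{\ontocal_I} E'$ and $E'\leq_{\ontocal_I} E$, while any instance containing a second tuple $R(a',b)$ with $a'\neq a$ witnesses $E<_{\ontocal_\das} E'$. A short case analysis of the $L_\das$-concepts whose $I$-extension avoids $d$ then shows that $\{a\}$ is the maximal achievable extension containing $a$, so $E$ is $\ontocal_I$-most-general but is strictly dominated by $E'$ in $\ontocal_\das$.

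The main obstacle is this final enumerative check: one must rule out all $L_\das$-concepts built from $\pi$, $\sigma$, nominals, and $\sqcap$ that could yield an explanation with a strictly larger $I$-extension. The active domain is tiny in both examples, and selections involving constants outside the active domain contribute no new tuples to extensions over $I$, so this analysis reduces to inspecting a small finite collection of non-equivalent concepts.
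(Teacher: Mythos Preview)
Your proof of part~(i) is identical to the paper's: both simply observe that the definition of \emph{explanation} uses only $ext(\cdot,I)$, which is shared by $\ontocal_\das$ and $\ontocal_I$.

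For part~(ii), your route is genuinely different. The paper does not construct fresh counterexamples; it points back to Example~4.8 (the running Cities/Train-Connections example) and observes that $E_1$ is most-general w.r.t.\ $\ontocal_\das$ but is strictly dominated by $E_3$ w.r.t.\ $\ontocal_I$, and conversely that $E_8$ is most-general w.r.t.\ $\ontocal_I$ while $E_7 >_{\ontocal_\das} E_8$ (with $E_7 \equiv_{\ontocal_I} E_8$). This buys brevity inside the paper, since the example is already developed, but it leans on a fairly large schema and on claims about that example that themselves require checking. Your two purpose-built instances are self-contained and expose the mechanism more transparently: in the first you exploit that nothing except $\top$ can sit strictly above a nominal w.r.t.\ $\sqsubseteq_\das$ (since every projection-based concept has empty extension in the empty instance), while $\sqsubseteq_I$ admits many more supersumers; in the second you exploit two concepts with the same $I$-extension but distinct $\das$-extensions.

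Two small points worth tightening. First, in the first example your statement that ``the only $L_\das$-concepts $\sqsupseteq_\das \{a\}$ are $\{a\}$ and $\top$'' should read ``up to $\sqsubseteq_\das$-equivalence''; the argument you implicitly use (any conjunct of the form $\pi_A(D)$ vanishes on the empty instance, so cannot be $\sqsupseteq_\das \{a\}$) is correct and should be stated. Second, the remark that ``selections involving constants outside the active domain contribute no new tuples'' is not literally true (e.g.\ $\sigma_{A\leq c}$ for very large $c$ selects everything), but the conclusion you need---that no $L_\das$ concept over this $I$ has an extension strictly containing $\{a\}$ while avoiding $d$---still holds, since every intersection-free concept containing $a$ in its $I$-extension is $\top$, $\{a\}$, or some $\pi_1(\sigma_\theta(R))$ with extension $\subseteq\{a,d\}$. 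Making that enumeration explicit finishes the argument cleanly.
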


\begin{proof}
The statement $(i)$ follows from Definition~\ref{dfn:explanation} and the definition of $ext$ for $\ontocal_\das$ and $\ontocal_I$. That is, $ext$ is the same on the input instance $I$ for both $\ontocal_\das$ and $\ontocal_I$, and the conditions of Definition~\ref{dfn:explanation} use only the value of $ext$ on $I$.
Going back to Example \ref{ex:explanationsWithDerived}, $E_1$ is a most-general explanation w.r.t.
$\ontocal_\das$, but it is not a most-general explanation w.r.t. $\ontocal_I$ (since
$E_3$ is a strictly more general explanation than $E_1$ w.r.t. 
$\ontocal_I$).  Thus, the first direction of $(ii)$ holds.
For the other direction of $(ii)$, consider $E_8$ which is a most-general explanation w.r.t. $\ontocal_I$. But it holds that $E_7>_{\ontocal_\das}E_8$ and $E_7$ is an explanation. 
Note that $E_7$ and $E_8$ are equivalent w.r.t. $\ontocal_I$.
\end{proof}

\section{Algorithms for Computing Most-General Explanations}
\label{sec:AlgorithmsForWhyNotExplanations}

Next, we formally introduce the ontology-based why-not
problem, which was
informally described in Section~\ref{sec:TheWhyNotProvenanceProblem},
and we define algorithms for computing most-general
explanations.
We start by defining the notion of a why-not instance (or why-not question).

\begin{definition}[Why-not instance]
Let $\das$ be a 
schema, $I$ an instance of $\das$, $q$ an $m$-ary query over $I$ and
$\tuple{a}= \tup$ a tuple of constants such that $\tuple{a} \notin
q(I)$. We call the quintuple $\wninstance$, where $Ans=q(I)$, a
\emph{why-not instance} or a {\em why-not question}. 
\end{definition}

In a why-not
instance, the answer set {\em Ans} of $q$ over $I$ is assumed to have been
computed already. This corresponds closely to the scenario under which 
why-not questions are
posed where
the
user requests explanations for why a certain tuple is missing in the output of
a query, which is computed a priori. Note that
since {\em Ans}=$q(I)$ is part of a why-not instance, the complexity of
evaluating $q$ over $I$ does not affect the complexity analysis of the
problems we study in this paper. In addition, observe that although a query $q$ is
part of a why-not instance, the query is not directly used 
in our derivation of explanations 
for why-not questions with ontologies. However, the general setup accomodates
the possibility to consider 
$q$ directly in the derivation of explanations and this is 
part of our future work.

We will study the following algorithmic problems concerning most-general explanations for
a why-not instance.

\begin{definition}
The \dwhynotp problem is the following decision problem: 
given a why-not instance $\wninstance$ and an $\sch$-ontology $\ontocal$ consistent with $I$, 
does there exist an explanation for  $\tuple{a}\not \in Ans$ w.r.t. $\ontocal$? 
\end{definition}

\begin{definition}
The \mgep problem is the following decision problem: 
given a why-not instance $\wninstance$, an $\sch$-ontology $\ontocal$ consistent with $I$, 
and a tuple of concepts $(C_1, \ldots, C_n)$, is the given tuple of
concepts a most-general explanation w.r.t. $\ontocal$ for $\tuple{a}\not\in Ans$? 
\end{definition}

\begin{definition}
The \whynotp problem is the following computational problem: 
given a why-not instance $\wninstance$ and an $\sch$-ontology $\ontocal$ consistent with $I$,
find a most-general explanation w.r.t. $\ontocal$ for
$\tuple{a}\not \in Ans$, if one exists.  
\end{definition}

Note that deciding the existence of an explanation w.r.t. a finite $\sch$-ontology is equivalent to
deciding existence of a most-general explanation w.r.t. the same $\sch$-ontology.

Thus,
our approach to the why-not problem makes use of $\sch$-ontologies. In
particular, our notion of a ``best explanation'' is a \emph{most-general
explanation}, which is defined with respect to an $\sch$-ontology. We
study the problem in three flavors: one in which the $\sch$-ontology
is obtained from an external source, and thus it is part of the input,
and two in which the $\sch$-ontology is not part of the input, and is
derived, respectively, from the schema $\das$, or from
the instance $I$.  

\subsection{External Ontology}
\label{sec:ExternalOntology}

We start by studying the case of computing ontology-based why-not
explanations w.r.t. an external $\sch$-ontology. 
We first study the
complexity of deciding whether or not there exists an explanation w.r.t. 
an external $\sch$-ontology.

\vfill\eject
\begin{restatable}{thm}{npcpdec}\label{np-comp-decision} ~
\begin{enumerate}
\item 
The problem \mgep is solvable in \PTIME. 
\item
The problem \dwhynotp is \NP-complete. It remains \NP-complete even for bounded schema arity.
\end{enumerate}
\end{restatable}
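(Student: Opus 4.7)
For part (1), my plan is first to check that $(C_1,\ldots,C_m)$ is an explanation and then to verify most-generality. The explanation check reduces to testing $a_i\in ext(C_i,I)$ for each $i$, which is polynomial by definition of $ext$, and iterating over the tuples of $\textit{Ans}$ to confirm that none lies componentwise in $ext(C_1,I)\times\cdots\times ext(C_m,I)$. For most-generality, the key observation is that because $I$ is consistent with $\ontocal$, the map $C\mapsto ext(C,I)$ is monotone under $\sqsubseteq$. Hence if any strictly larger explanation $E'=(C_1',\ldots,C_m')$ exists, then enlarging only a single component of $E$ already yields an explanation: picking an index $i$ with $C_i\sqsubseteq C_i'$ and $C_i'\not\sqsubseteq C_i$ and replacing only $C_i$ by $C_i'$ produces a tuple $E''$ whose extension product is sandwiched between those of $E$ and $E'$, so it remains disjoint from $\textit{Ans}$, while $a_i\in ext(C_i',I)$ holds because it did in $E'$. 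Thus it suffices to enumerate all pairs $(i,C')$ with $C'\in\concepts$, $C_i\sqsubseteq C'$ and $C'\not\sqsubseteq C_i$, and test each single-position replacement. Since $\concepts$ is finite, subsumption is polynomial-time decidable (for instance via Theorem~\ref{dl-lite-complexity} in the DL-Lite case), and $ext$ is polynomial-time computable, the whole procedure runs in polynomial time.

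For part (2), the \NP upper bound is routine: a nondeterministic algorithm guesses a tuple $(C_1,\ldots,C_m)\in\concepts^m$ (each $C_i$ has polynomial representation) and verifies both explanation conditions in polynomial time exactly as in part (1). For the matching lower bound, I plan a reduction from 3-SAT that keeps schema arity bounded. Given a 3-CNF $\varphi$ over variables $x_1,\ldots,x_m$ with clauses $\gamma_1,\ldots,\gamma_k$, I build a schema with $m$ binary relations $S_1,\ldots,S_m$. The instance $I$ contains constants $a_1,\ldots,a_m$ forming the missing tuple $\tuple{a}$, a universal constant $u$, and a blocker constant $c_{j,r}$ for every clause $\gamma_j$ and each of its three literal positions $r\in\{1,2,3\}$. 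The ontology $\ontocal$ has $2m$ concepts $T_i,F_i$ with only reflexive subsumption; the extensions put $a_i$ and $u$ into both $ext(T_i,I)$ and $ext(F_i,I)$, while each $c_{j,r}$ is placed into the concept corresponding to the \emph{negation} of the $r$-th literal $\ell_{j_r}$ of $\gamma_j$ (so into $ext(F_{j_r},I)$ if $\ell_{j_r}=x_{j_r}$ and into $ext(T_{j_r},I)$ otherwise). Define the tuple $\tuple{b}_j=(b_{j,1},\ldots,b_{j,m})$ by $b_{j,i}:=c_{j,r}$ if $i=j_r$ for some $r$, and $b_{j,i}:=u$ otherwise. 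The query is $q(x_1,\ldots,x_m)\ :=\ \exists y.\,S_1(x_1,y)\wedge\cdots\wedge S_m(x_m,y)$, and we set $S_i^I := \{(b_{j,i},y_j) : j=1,\ldots,k\}$ for pairwise distinct fresh witnesses $y_j$. A direct computation gives $q(I)=\{\tuple{b}_1,\ldots,\tuple{b}_k\}=\textit{Ans}$, and $\tuple{a}\notin q(I)$ by freshness of the $a_i$'s.

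Correctness of the reduction rests on two observations. First, in any explanation one must have $C_i\in\{T_i,F_i\}$, since those are the only concepts containing $a_i$. Second, $b_{j,i}\in ext(C_i,I)$ holds automatically for $i\notin\{j_1,j_2,j_3\}$ because $u$ is universal, and holds at $i=j_r$ iff $C_{j_r}$ equals the negation of $\ell_{j_r}$. Hence avoiding $\tuple{b}_j$ is equivalent to choosing $C_{j_r}=\ell_{j_r}$ for some $r$, which is exactly the statement that the induced assignment $\tau$ (with $\tau(x_i)=\text{true}$ iff $C_i=T_i$) satisfies clause $\gamma_j$. Therefore an explanation exists iff $\varphi$ is satisfiable.

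The two main technical obstacles are: (i) engineering the query and instance so that $q(I)$ equals the designed $\textit{Ans}$ exactly while keeping the schema arity constant, which is handled by the ``single existential witness per tuple'' trick that turns $\textit{Ans}$ into the output of a conjunctive query over a binary schema; and (ii) justifying that the monotonicity-based single-position check suffices to certify most-generality in part (1), which is what prevents an exponential blow-up over all $|\concepts|^m$ candidate tuples of concepts.
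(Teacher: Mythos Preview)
Your proof is correct. Part (1) follows the same single-position-replacement strategy as the paper, and you in fact give a cleaner justification for why it suffices to test one-coordinate enlargements (the paper's proof states this without the monotonicity argument you spell out). One minor remark: since the finite $\sch$-ontology $\ontocal=(\concepts,\sqsubseteq,ext)$ is part of the input, the pre-order $\sqsubseteq$ is given explicitly, so subsumption checking is just a lookup; your appeal to Theorem~\ref{dl-lite-complexity} is unnecessary here.

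For part (2), your reduction is genuinely different from the paper's. The paper reduces from \textsc{Set Cover}: given $(S,X,k)$ it uses a \emph{unary} schema $\{R\}$, the diagonal query $q(x,\ldots,x)\,\text{:-}\,R(x)$ of arity $k$, the missing tuple $(a,\ldots,a)$ with $a\notin S$, and one concept $C_Y$ per $Y\in X$ with $ext(C_Y,I)=(S\cup\{a\})\setminus Y$. An explanation $(C_{Y_1},\ldots,C_{Y_k})$ then avoids the diagonal $Ans=\{(s,\ldots,s):s\in S\}$ iff $Y_1,\ldots,Y_k$ covers $S$. This is somewhat shorter: the diagonal query dispenses with the need to engineer distinct answer tuples and the per-clause witnesses $y_j$, and the complement-style extensions make the covering condition immediate. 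Your 3-SAT reduction, on the other hand, makes the ``choice'' aspect of the problem very transparent (each coordinate picks a truth value), at the cost of a slightly more elaborate instance and binary rather than unary relations. Both constructions keep schema arity bounded while letting the query arity grow, which is exactly what is needed in light of Theorem~\ref{correct-running-naive}. One small caveat you should state explicitly: your construction assumes the three variable indices $j_1,j_2,j_3$ in each clause are pairwise distinct (otherwise $b_{j,i}$ is ill-defined); this is harmless since 3-SAT remains \NP-hard under that restriction.
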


Intuitively, to check if a tuple of concepts is a most-general explanation, we can first check in \PTIME if it is an explanation. Then, for each concept in the explanation, we can check in \PTIME if it is subsumed by some other concept in $\ontocal$ such that by replacing it with this more general concept, the tuple of concepts remains 
an explanation.
The membership in \NP is due to the fact that we can guess a tuple
of concepts of polynomial size and verify in \PTIME that it is  an explanation. 
The lower bound is by a reduction from the
\textsc{Set Cover} problem.
Our reduction uses a query of unbounded arity and a schema of bounded
arity. As we will show in Theorem~\ref{correct-running-naive}, the 
problem is in \PTIME if the arity of the query is fixed.

In light of the above result, we define an algorithm, called the $\naivealgo$, which is an \EXPTIME
algorithm for solving the \whynotp problem.

\begin{algorithm}
\caption{\naivealgo}
   \KwIn{a why-not instance $\wninstance$, where $\tuple{a} = (a_1, \ldots, a_m)$, a finite $\sch$-ontology $\ontocal =\sontoshort$}
   \KwOut{the set of most-general explanations for $\tuple{a}\not\in Ans$ wrt $\ontocal$}
   
		Let $\mathcal{C}(a_i)=\{C\in \mathcal{C} \mid a_i\in ext(C,I) \}$ for all $i, 1\leq i\leq m$

		Let $\mathcal{X}=\{(C_1, \ldots , C_m) \mid C_i \in \mathcal{C}(a_i) \text{ and } (ext(C_1,I)\times \ldots\times ext(C_m,I))\cap Ans =\emptyset \}$

		\ForEach{ pair of explanations $E_1$,$E_2 \in \mathcal{X}$, $E_1\neq E_2$}
		{
		\If{ $E_1 >_\ontocal E_2$}
		{remove $E_2$ from $\mathcal{X}$}
		}
		return $\mathcal{X}$
\label{alg:computeall}
\end{algorithm}

This algorithm first generates the set of all
possible explanations, and then iteratively reduces the set by
removing
the tuples of concepts that are less general than some tuple of
concepts in the set. In the end, only most-general explanations
are returned.
At first, in
line 1, for each element of the tuple $\tuple{a} = \tup$, 
we build the set $\mathcal{C}(a_i)$ containing all the concepts in $\cal C$ whose
extension contains $a_i$.  Then, in line 2, we build the set of all
possible explanations by picking a concept in $\mathcal{C}(a_i)$ for
each position in $\tuple{a}$, and by discarding the ones that have a
non empty intersection with the answer set $Ans$. Finally, in lines
3-5,  we remove from the set those explanations that have a strictly
more general explanation in the set.

We now show that \naivealgo is correct (i.e. it outputs the set of all
most-general explanations for the given why-not instance w.r.t. to the given 
$\sch$-ontology), and
runs in exponential time in the size of the
input.

\begin{restatable}
{thm}{naive}
\label{correct-running-naive}
Let the why-not instance $\wninstance$ and the $\sch$-ontology $\ontocal$ be an input to \naivealgo and let $\cal X$ be the corresponding output. The following hold:
\begin{enumerate}
	\item $\mathcal{X}$ is the set of all most-general explanations for $\tuple{a}\not \in Ans$ (modulo equivalence);
	\item \naivealgo runs in \EXPTIME in the size of the input (in \PTIME if we fix the arity of the input query).
\end{enumerate}
\end{restatable}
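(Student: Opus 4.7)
The plan is to prove the two claims separately, matching the two phases of \naivealgo.

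For part (1), I first argue that immediately after line~2, the set $\mathcal{X}$ equals the set of \emph{all} explanations for $\tuple{a}\not\in Ans$ w.r.t.\ $\ontocal$. This is a direct unpacking of Definition~\ref{dfn:explanation}: by line~1, $C_i\in\mathcal{C}(a_i)$ holds iff $a_i\in ext(C_i,I)$; combined with the disjointness test in line~2, the two conditions of the definition are captured exactly. Next, I must show the pruning loop in lines~3--5 leaves precisely the most-general explanations among these. In one direction, no most-general explanation $E$ is ever removed: an element is deleted only when paired with some $E_1\in \mathcal{X}$ satisfying $E_1>_{\ontocal} E$, but such an $E_1$ is itself an explanation (it was placed in $\mathcal{X}$ at line~2), contradicting the maximality of $E$. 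In the opposite direction, every non-most-general explanation $E$ is eventually removed: since the initial $\mathcal{X}$ is finite and $\leq_{\ontocal}$ is a pre-order, there exists a most-general explanation $E_1$ with $E_1>_{\ontocal} E$; by the previous direction $E_1$ survives the loop, so the pair $(E_1,E)$ is considered at some iteration and $E$ is deleted. Hence $\mathcal{X}$ at termination coincides, modulo equivalence, with the set of most-general explanations.

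For part (2), I bound the work done in each phase. Each set $\mathcal{C}(a_i)$ has size at most $|\mathcal{C}|$ and is computable in polynomial time by invoking the polynomial-time $ext$ oracle for each $C\in\mathcal{C}$. Therefore $|\mathcal{X}|$ after line~2 is bounded by $|\mathcal{C}|^m$. For each candidate tuple $(C_1,\ldots,C_m)$, the disjointness test need not materialize the Cartesian product: one iterates over the input set $Ans$ (of polynomial size) and, for each $\tuple{b}\in Ans$, checks in polynomial time whether $b_i\in ext(C_i,I)$ for all $i$. The pruning loop runs over $O(|\mathcal{X}|^2)$ pairs, each comparison requiring $m$ lookups in the pre-order $\sqsubseteq$, which is polynomial. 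The overall cost is therefore of the form $|\mathcal{C}|^{2m}\cdot\mathrm{poly}(n)$, where $n$ is the input size; this is exponential in general but polynomial as soon as the arity $m$ is fixed.

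The main obstacle is the correctness of the pruning loop, where the set $\mathcal{X}$ is mutated while being iterated. The key leverage is that $<_{\ontocal}$ is a strict order inherited from a pre-order, so the argument goes via a maximal witness: one shows first that maximal elements are never removed, and then uses the finiteness of the initial $\mathcal{X}$ to ascend from any non-maximal element to such a surviving maximal witness that forces its eventual deletion. A minor technical point worth noting is that quotienting by equivalence is needed in the statement because distinct but equivalent tuples $E$ and $E'$ (with $E\leq_{\ontocal}E'$ and $E'\leq_{\ontocal}E$) never delete each other, so both may remain in the output; this is harmless and matches the ``modulo equivalence'' clause.
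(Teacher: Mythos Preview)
Your proposal is correct and follows essentially the same approach as the paper's own proof: both argue that line~2 collects exactly the explanations, then show (i) most-general explanations survive the pruning loop and (ii) non-most-general ones are removed because a most-general witness above them survives; the running-time analysis is likewise the same $|\mathcal{C}|^{O(m)}\cdot\mathrm{poly}$ counting. Your treatment is marginally more careful in two places---you note explicitly that the disjointness test can be done by scanning $Ans$ rather than materializing the Cartesian product, and you spell out why the ``modulo equivalence'' clause is needed---but these are refinements of the same argument, not a different route.
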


Theorem~\ref{correct-running-naive}, together with Theorem~\ref{dl-lite-gav-complexity}, yields the following corollary
(recall that, by construction of $\ontocal_\obda$, it holds that every input instance $I$
is consistent with $\ontocal_\obda$). 

\begin{corollary}
There is an algorithm that takes as input a why-not instance $\wninstance$ and an OBDA specification $\obda=\bspec$, where $\mathcal{T}$ is a \dllite TBox and $\cal M$ is a set of GAV mappings, and computes all the most-general explanations for $\tuple{a}\notin Ans$ w.r.t. the $\sch$-ontology $\ontocal_\obda$ in \EXPTIME in the size of the input (in \PTIME if the arity of the $q$ is fixed) .
\end{corollary}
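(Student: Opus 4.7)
The plan is to obtain the corollary by a straightforward composition of Theorem~\ref{dl-lite-gav-complexity} and Theorem~\ref{correct-running-naive}. First, I would invoke Theorem~\ref{dl-lite-gav-complexity} as a preprocessing step to compute the $\sch$-ontology $\ontocal_\obda = (\mathcal{C}_{\ontocal_\obda}, \sqsubseteq_{\ontocal_\obda}, ext_{\ontocal_\obda})$ from the OBDA specification $\obda = \bspec$ in polynomial time. This yields a finite set of concepts (namely, the basic concept expressions occurring in $\mathcal{T}$), a subsumption relation that has been materialized, and the extension function $ext_{\ontocal_\obda}$, whose polynomial-time computability is guaranteed by Theorem~\ref{dl-lite-complexity}(2).

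Next, I would feed the why-not instance $\wninstance$ together with $\ontocal_\obda$ into \naivealgo. By Theorem~\ref{correct-running-naive}(1), the algorithm outputs exactly the set of most-general explanations for $\tuple{a} \notin Ans$ with respect to $\ontocal_\obda$, which is what the corollary asks for. By Theorem~\ref{correct-running-naive}(2), the running time is \EXPTIME in general, and \PTIME when the arity of $q$ is fixed. Since the preprocessing step is itself \PTIME, composing the two algorithms preserves both complexity bounds.

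The only subtlety I would need to justify carefully is that the ontology produced by Theorem~\ref{dl-lite-gav-complexity} satisfies the preconditions required by \naivealgo, namely: (i) $\ontocal_\obda$ is a finite $\sch$-ontology, (ii) $ext_{\ontocal_\obda}$ is polynomial-time computable, and (iii) the input instance $I$ is consistent with $\ontocal_\obda$. Point (i) is immediate since $\mathcal{C}_{\ontocal_\obda}$ consists of basic concept expressions from the finite TBox $\mathcal{T}$; point (ii) follows from Theorem~\ref{dl-lite-complexity}(2); and point (iii) follows from the observation made earlier in the paper that, for the OBDA specifications under consideration (\dllite TBox with GAV mappings), every $\sch$-instance has a solution, so $I$ is automatically consistent with $\ontocal_\obda$. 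With these three conditions verified, both hypotheses of Theorem~\ref{correct-running-naive} are met, and the corollary follows. I do not anticipate a genuine obstacle here; the work is essentially bookkeeping to confirm that the two black-box results plug together cleanly.
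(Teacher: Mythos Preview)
Your proposal is correct and follows exactly the same approach as the paper, which simply states that the corollary follows from Theorem~\ref{correct-running-naive} together with Theorem~\ref{dl-lite-gav-complexity}, noting parenthetically that every input instance $I$ is consistent with $\ontocal_\obda$ by construction. Your write-up is in fact more detailed than the paper's own justification, which treats the result as immediate.
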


\subsection{Ontologies from an instance}
\label{sec:OntologyDerivedFromWorkspaceInstance}

We now study the why-not problem w.r.t. an $\sch$-ontology $O_I$ 
that is derived from an instance.
First, note that the presence of nominals in the concept language
guarantees a trivial answer for the \dwhynotproblem{$\ontocal_I$} problem. 
An explanation always exists, namely the explanation
with nominals corresponding to the constants of the tuple $\tuple{a}$. 
In fact, a \emph{most-general explanation} always exists, as  follows
from the results below. \looseness=-1

\begin{definition}
The 
\whynotproblem{$\ontocal_I$} is the following computational problem: 
given a why-not instance $\wninstance$, find 
a most-general explanation w.r.t. $\ontocal_I$ for $\tuple{a}\not \in Ans$, 
where $\ontocal_I$ is the $\sch$-ontology that is derived from $I$, as defined in Section~\ref{DerivingOntology}. 
\end{definition}

First, we state an important proposition, 
that underlies the correctness of the algorithms that we will present. The following
proposition shows that, when we search for explanations w.r.t. $\ontocal_I$, we can always restrict our attention to a particular finite restriction of this ontology.

\begin{restatable}{prop}{activedomain}\label{prop-active-domain}
Let $\wninstance$ be a why-not instance. 
If $E$ is an explanation for $\tuple{a}\not \in Ans$ w.r.t. $\ontocal_I$ (resp. $\ontocal_\das$),  then there exists an explanation $E'$ for $\tuple{a}\not \in Ans$ such that $E <_{\ontocal_I[\calK]} E'$ (resp. $E <_{\ontocal_\das[\calK]} E'$),
where $\mathcal{K} = \adom{I} \cup \{a_1, \ldots, a_m\}$ and each constant in $E'$ belongs to $\mathcal{K}$. 
\end{restatable}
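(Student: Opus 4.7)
The plan is to construct the required $E'$ from $E$ by systematically replacing each constant appearing in $E$ that lies outside $\calK$ with a suitable constant from $\calK$, while preserving the extension of each concept over $I$ and producing a $\das$-more-general concept. Writing $E = (C_1, \ldots, C_m)$, the $L_\das$ grammar tells us that each $C_i$ is a conjunction of concepts of the form $\top$, $\{c\}$, or $\pi_A(D)$, where $D$ is either $R$ or $\sigma_{A_1 \mathtt{op}_1 c_1, \ldots, A_n \mathtt{op}_n c_n}(R)$. Hence constants appear only inside nominals and inside selection conditions, and the rewriting reduces to inspecting one condition at a time.

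I would first dispose of two easy cases. If $\{c\}$ is a conjunct of $C_i$, then $a_i \in \ex{C_i}{I} \subseteq \{c\}$ forces $c = a_i \in \calK$. If $\sigma_{A = c}(R)$ occurs inside a conjunct of $C_i$, then $a_i \in \ex{C_i}{I}$ forces this selection to be non-empty on $I$, so $c \in \pi_A(R^I) \subseteq \adom{I} \subseteq \calK$. The only remaining problematic occurrences are inequality comparisons $A\,\mathtt{op}\,c$ with $\mathtt{op} \in \{<, >, \leq, \geq\}$ and $c \notin \calK$.

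The heart of the argument is a uniform rewriting of such an inequality condition. For $\sigma_{A < c}(R)$ with $c \notin \calK$, let $c^{**}$ denote the least element of $\calK \cap [c, +\infty)$ when that set is non-empty, and otherwise simply replace the subexpression $\sigma_{A < c}(R)$ by $R$. The key observation is that $\ex{\sigma_{A < c}(R)}{I} = \ex{\sigma_{A < c^{**}}(R)}{I}$: any tuple $\bar{b} \in R^I$ with $\bar{b}.A \in [c, c^{**})$ would satisfy $\bar{b}.A \in \adom{I} \subseteq \calK$ together with $\bar{b}.A \geq c$, contradicting the minimality of $c^{**}$ in $\calK \cap [c, +\infty)$. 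Moreover, since $c \leq c^{**}$, the inclusion $\ex{\sigma_{A < c}(R)}{I'} \subseteq \ex{\sigma_{A < c^{**}}(R)}{I'}$ holds on every instance $I'$, so $\sigma_{A < c}(R) \sqsubseteq_\das \sigma_{A < c^{**}}(R)$. The boundary case where $c^{**}$ does not exist is handled similarly: then all values in $\adom{I}$ lie below $c$, so $\ex{\sigma_{A < c}(R)}{I} = R^I$, and $\sigma_{A < c}(R) \sqsubseteq_\das R$ trivially. The cases $\sigma_{A \leq c}$, $\sigma_{A > c}$, $\sigma_{A \geq c}$ are analogous; notably, $\sigma_{A \leq c}(R)$ with $c \notin \calK$ coincides with $\sigma_{A < c}(R)$ on $I$ (since $\adom{I} \subseteq \calK$ forces $\bar{b}.A \neq c$ for $\bar{b} \in R^I$), which reduces the $\leq$-case to the $<$-case.

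Applying these replacements inside every conjunct of every $C_i$ produces $C'_i \in L_\das[\calK]$ satisfying both $\ex{C'_i}{I} = \ex{C_i}{I}$ and $C_i \sqsubseteq_\das C'_i$. Setting $E' = (C'_1, \ldots, C'_m)$, the extensional equality preserves both conditions of Definition~\ref{dfn:explanation}, so $E'$ is still an explanation for $\tuple{a} \notin Ans$, and the schema-level subsumption gives $E \leq_{\ontocal_\das[\calK]} E'$ (with $E \equiv_{\ontocal_I[\calK]} E'$ in the $\ontocal_I$ case, which is even stronger). The main obstacle is the boundary analysis of the inequality rewriting: choosing the constant from $\calK$ so that the selection result on $I$ is reproduced exactly while the rewritten concept is still $\das$-more general. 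The case analysis above handles this uniformly for all four comparison operators, and it exploits in an essential way the assumption that $\adom{I} \subseteq \calK$.
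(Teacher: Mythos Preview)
Your proposal is correct and follows essentially the same strategy as the paper: replace each comparison constant $c\notin\calK$ by the nearest element of $\calK$ on the appropriate side (or drop the condition when no such element exists), and observe that nominals and equality selections already force their constants into $\calK$.

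There is one presentational difference worth noting. The paper argues, after each replacement, that the resulting tuple (i) still contains $\tuple{a}$ and (ii) still has empty intersection with $Ans$, treating these two conditions separately and deriving~(ii) by a small contradiction argument. You instead prove the stronger fact that the rewritten concept has \emph{exactly the same extension on $I$} as the original one: no tuple $\bar b\in R^I$ can have $\bar b.A$ in the gap between $c$ and the chosen $c^{**}\in\calK$, since $\bar b.A\in\adom{I}\subseteq\calK$. From this, both explanation conditions follow at once. This is a cleaner route and also makes transparent why $E'\equiv_{\ontocal_I} E$ (not merely $E\leq_{\ontocal_I}E'$), while the inclusion $c<c^{**}$ gives $C_i\sqsubseteq_\das C'_i$ for the $\ontocal_\das$ case. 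The paper's iterative replacement and your one-shot rewriting are equivalent; both rely on the same combinatorial fact about $\adom{I}\subseteq\calK$.
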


In our proof, we iteratively reduce the number of constants occurring in
the explanation. That is, for every explanation $E$ with concepts
containing constants outside of $\adom{I} \cup \{a_1, \ldots, a_m\}$,
we produce a new explanation $E'$ which is more general than $E$ and
which contains less constants outside of $\adom{I} \cup
\{a_1, \ldots, a_m\}$. 

Notice that since, in principle, it is possible to materialize the
ontology $\ontocal_I[\calK]$ (i.e., to explicitly compute all the
concepts $\cal C$ in the ontology, the subsumption relation
$\sqsubseteq_I$, and the extension $ext$), the \naivealgo, together with
Proposition~\ref{prop-active-domain}, give us a method for solving
\whynotproblem{$\ontocal_I$}.
In particular, given a  schema, \naivealgo solves
\whynotproblem{$\ontocal_I$} in 2\EXPTIME (in \EXPTIME if the arity of
$q$ is fixed). This is because
to find a most-general explanation w.r.t $\ontocal_I$, it is sufficient to
restrict to  the concept language $L_\das[\calK]$ and its
fragments, where $\mathcal{K} = \adom{I}\cup\{a_1,\ldots,a_m\}$. Then
\whynotproblem{$\ontocal_I$} is solvable in 2\EXPTIME 
follows from the fact that the $\sch$-ontology $\ontocal_I[\calK]$ is
computable in at most 2\EXPTIME. 

We now present a more effective algorithm for solving
\whynotproblem{$\ontocal_I$}. (See
Algorithm~\ref{alg:greedy}.)
We start by introducing the notion of a \textit{least upper bound} of a set of constants $X$ w.r.t. an
instance $I$, denoted by
$\textsf{lub}_I(X)$. This, intuitively, corresponds to the most-specific concept
whose extension contains all constants of $X$.  
We first consider the case in which $\textsf{lub}_I(X)$ is expressed using selection-free $L_\das$ concepts.
The following lemma  
states two important properties of $\textsf{lub}_I(X)$ that are crucial for the correctness of Algorithm~\ref{alg:greedy}.

\begin{restatable}{lem}{lubnosel}\label{lub-no-selection}
Given an instance $I$ of schema $\sch$ and a set of constants $X$, we can compute in polynomial
time a selection-free $L_\das$ concept, denoted $\textsf{lub}_I(X)$, that is the smallest concept whose extension contains all the elements in $X$ definable in the language. In particular, the following hold:
\begin{enumerate}
   \item$X \subseteq ext(\textsf{lub}_I(X),I)$,
   \item there is no concept $C'$ in selection-free  $L_\das$ such that $C' \sqsubset_I \textsf{lub}_I(X)$ and  $X \subseteq ext(C',I)$.
\end{enumerate}
\end{restatable}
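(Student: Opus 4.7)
The plan is to construct $\textsf{lub}_I(X)$ explicitly as the intersection of all atomic selection-free concepts whose extension on $I$ contains $X$, and then verify both properties by a direct case analysis. First I would normalize concept expressions: by the grammar of selection-free $L_\das$ (where $D ::= R$), every concept is equivalent, modulo associativity and commutativity of $\sqcap$, to an intersection $C_1 \sqcap \cdots \sqcap C_k$ of \emph{basic} conjuncts, each of which is either $\top$, a nominal $\nom{c}$, or a projection $\pi_A(R)$ for some $R \in \sch$ and attribute $A$ of $R$. Aside from nominals, there are only $O(\sum_{R\in\sch} \mathrm{arity}(R))$ such basic building blocks.

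Next I would define the algorithm explicitly. If $X = \{c\}$ is a singleton, return $\textsf{lub}_I(X) = \nom{c}$. Otherwise, return the intersection
$\bigsqcap\{\pi_A(R) \mid R\in\sch,\ A \text{ an attribute of } R,\ X \subseteq \pi_A(R^I)\}$,
with the empty intersection read as $\top$. Each inclusion test $X \subseteq \pi_A(R^I)$ is performed in polynomial time by scanning $R^I$, and only polynomially many projections need to be tested, so the overall running time is polynomial in $|\sch|$ and $|I|$.

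Property (1) is then immediate: $X$ sits inside the extension of each selected conjunct and hence inside their intersection. For property (2), let $C'$ be any selection-free concept with $X \subseteq \ex{C'}{I}$, written in normalized form $C' = C'_1 \sqcap \cdots \sqcap C'_\ell$. Each $C'_j$ individually satisfies $X \subseteq \ex{C'_j}{I}$, and I would argue $\textsf{lub}_I(X) \sqsubseteq_I C'_j$ by cases on the shape of $C'_j$: the case $C'_j = \top$ is trivial; the case $C'_j = \pi_A(R)$ is precisely one where $\pi_A(R)$ was selected into $\textsf{lub}_I(X)$ by construction; and the case $C'_j = \nom{c}$ forces $X \subseteq \{c\}$, i.e., $|X|\leq 1$, matching the singleton branch of the construction. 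Conjoining the bounds yields $\textsf{lub}_I(X) \sqsubseteq_I C'$, so no strictly more specific selection-free $C'$ can still contain $X$ in its extension, which is exactly property (2).

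The main subtlety to keep in mind will be the treatment of nominals, whose extensions have size at most one and therefore behave as a distinct branch when $|X| \geq 2$; a secondary point is justifying the normal form, but this is essentially syntactic once one observes that in the selection-free fragment projections can only be applied directly to relations, so no simplification of $\pi$ across $\sqcap$ is ever needed. Everything else is a routine structural argument over the grammar of selection-free $L_\das$.
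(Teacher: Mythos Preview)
Your proposal is correct and follows essentially the same approach as the paper: both define $\textsf{lub}_I(X)$ as the conjunction of all basic (intersection-free, selection-free) concepts whose extension in $I$ contains $X$, and both verify properties~(1) and~(2) directly from this construction. Your explicit singleton branch for nominals and your decomposition of an arbitrary $C'$ into its basic conjuncts for property~(2) are in fact more carefully spelled out than the paper's own argument, which tacitly treats $C'$ as if it were itself a basic concept; aside from the harmless edge case $X=\emptyset$ (which never arises in the algorithm), there is nothing to add.
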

 
We are now ready to introduce the algorithm. We will start with a high-level description of the idea behind it. 
The algorithm navigates through the search space
of possible explanations using an incremental search strategy and
makes use of the above defined notion of $\textsf{lub}$. 
We start with an explanation that has, in each position, the $\textsf{lub}$ of the
constant (i.e., nominal) that occurs in that position.
Then, we try to construct a more general
explanation by expanding 
the set of constants considered by each $\textsf{lub}$.

Notice that \greedyalgo produces explanations which are tuples of conjunctions of concepts.
Therefore it produces an explanation whose concepts are concept
expressions in the language $L_\das$ or selection-free $L_\das$. We
will study the behavior of the algorithm in each of these cases
separately.

\begin{algorithm}
\label{alg:greedy}
\caption{\greedyalgo}
   \KwIn{a why-not instance $\wninstance$}
   \KwOut{a most-general explanation for $\tuple{a}\notin Ans$ wrt $\ontocal_I$} 

Let $\mathcal{K} = \adom{I}\cup\{a_1,\ldots,a_m\}$

Let $\mathcal{X} = (X_1, \ldots , X_m)$ s.t. each $X_j = \{a_j\}$. \textit{\scriptsize{// support set}}

Let $E= (C_1, \ldots , C_m)$ s.t. each $C_j = \textsf{lub}_I(X_j)$. \textit{\scriptsize{// first candidate explanation}}

\ForEach{$1\leq j\leq m$}
{

\ForEach{$b \in \adom{I} \setminus ext(E_j,I)$}
{

$X_j' = X_j \cup \{b\}$

Let $C'_j = \textsf{lub}_I(X_j')$ \textit{\scriptsize{// a more general concept in position $j$}}

Let $E' := (C_1, \ldots, C'_j, \ldots C_m)$ \textit{\scriptsize{// a more general explanation}}

\If{$E' \cap Ans = \emptyset$} 
{
$E := E'$

$\mathcal{X} := (X_1, \ldots, X'_j, \ldots X_m)$
}
}
}
return $E$
\end{algorithm}

First, we focus on the case in which \greedyalgo produces most-general explanations using selection-free $L_\das$ concepts. 
We show that the algorithm is correct, i.e., that it outputs an explanation for $\tuple{a}\notin Ans$ w.r.t. $\ontocal_I$, and that it runs in polynomial time with selection-free $L_\das$.

\begin{restatable}[Correctness and running time of \greedyalgo]{thm}{greedynosel}
\label{greedy-no-sel}
Let the why-not instance $\wninstance$ be an input to \greedyalgo and $E$ the corresponding output. 
The following holds:
\begin{enumerate}
	\item $E$ is a most-general explanation for $\tuple{a}\not \in Ans$ w.r.t. $\ontocal_I = (\mathcal{C},\sqsubseteq_I,ext)$, where $\cal C$ is selection-free $L_\das$;
	\item \greedyalgo runs in \PTIME in the size of the input.
\end{enumerate}
\end{restatable}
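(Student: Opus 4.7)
The proof has two parts: correctness of the returned $E$, and a polynomial-time bound.

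For correctness, I first establish the loop invariant that $E$ is always an explanation. Initially $C_j = \textsf{lub}_I(\{a_j\})$, and since the nominal $\{a_j\}$ is itself a selection-free concept whose extension equals $\{a_j\}$, minimality in Lemma~\ref{lub-no-selection} forces $ext(C_j,I) = \{a_j\}$; hence $ext(C_1,I)\times\cdots\times ext(C_m,I) = \{\tuple{a}\}$, which is disjoint from $Ans$ by hypothesis. Preservation is immediate: each update happens only when the line-9 disjointness check succeeds, and $a_j \in X_j' \subseteq ext(C_j',I)$ by property~1 of Lemma~\ref{lub-no-selection}.

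For most-generality, I argue by contradiction: suppose some explanation $E' = (C_1', \ldots, C_m')$ satisfies $E' >_{\ontocal_I} E$. By Proposition~\ref{prop-active-domain} (applied to $E'$) I may assume all constants in $E'$ lie in $\calK = \adom{I} \cup \{a_1, \ldots, a_m\}$. Fix a position $k$ with $ext(C_k,I) \subsetneq ext(C_k',I)$, and pick a witness $b \in \adom{I} \cap (ext(C_k',I) \setminus ext(C_k,I))$, using the structural fact that any non-$\top$ selection-free $L_\das[\calK]$ concept has extension contained in $\adom{I} \cup \calK$. By the order in which the outer loop processes positions, when $j = k$ was reached, $C_1, \ldots, C_{k-1}$ held their final values while $C_{k+1}, \ldots, C_m$ were still at their initial values; because $b \in \adom{I} \setminus ext(C_k,I)$, the inner loop did attempt $b$ and the candidate failed, yielding a tuple $\tuple{d} \in Ans$ with $d_k \in ext(\textsf{lub}_I(X_k \cup \{b\}),I)$, $d_j \in ext(C_j,I)$ for $j<k$, and $d_j$ in the initial extension for $j>k$. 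Since those later positions only grew afterwards, $\tuple{d}$ still lies in $ext(C_1,I) \times \cdots \times ext(C_m,I)$ in the final state. Finally, $X_k \cup \{b\} \subseteq ext(C_k',I)$ and property~2 of Lemma~\ref{lub-no-selection} give $\textsf{lub}_I(X_k \cup \{b\}) \sqsubseteq_I C_k'$, so $d_k \in ext(C_k',I)$, while $d_j \in ext(C_j,I) \subseteq ext(C_j',I)$ for $j \neq k$. Hence $\tuple{d}$ witnesses that the product of extensions of $E'$ intersects $Ans$, contradicting that $E'$ is an explanation.

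For the running-time bound, the outer loop runs $m$ times, the inner loop at most $|\adom{I}|$ times, and each iteration performs one polynomial-time lub computation (Lemma~\ref{lub-no-selection}) and one explanation check that, for each tuple in the finite set $Ans$, tests coordinate membership in $ext(C_j,I)$; the latter is polynomial because every selection-free $L_\das$ concept can be evaluated on $I$ in polynomial time. The product of these bounds is polynomial in the input.

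The main obstacle is producing the witness $b \in \adom{I}$ in the most-generality step. This is immediate except in the residual case where $C_k' \equiv \top$ and $ext(C_k,I) \supseteq \adom{I}$; this case must be handled separately by observing that, because $Ans$ consists of tuples over $\adom{I}$, replacing $C_k$ by $\top$ in $E$ preserves disjointness from $Ans$, so any strict improvement via $\top$ at position $k$ is already reflected by one within $\adom{I}$ modulo the natural equivalence, and the timing argument then applies directly.
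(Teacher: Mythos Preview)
Your overall structure matches the paper's: establish that $E$ is always an explanation via a loop invariant, then argue most-generality by contradiction, picking a witness $b\in ext(C_k',I)\setminus ext(C_k,I)$ that the inner loop must have tried and rejected. You are in fact more careful than the paper on the timing (positions $<k$ are final when $j=k$ is processed, positions $>k$ only grow afterwards) and on invoking the minimality clause of Lemma~\ref{lub-no-selection} to obtain $\textsf{lub}_I(X_k\cup\{b\})\sqsubseteq_I C_k'$, which the paper leaves implicit.

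The residual case you isolate, however, is a real gap that your patch does not close. When $C_k'\equiv\top$ and $ext(C_k,I)\supseteq\adom{I}$, you assert that the improvement to $\top$ is ``already reflected by one within $\adom{I}$ modulo the natural equivalence,'' but $C_k$ and $\top$ are \emph{not} $\sqsubseteq_I$-equivalent here (their extensions are $\adom{I}\subsetneq\const$), so no element of $\adom{I}$ is available as a witness and the timing argument has nothing to work with. Concretely: take $\sch=\{R\}$ with $R$ unary, $I=\{R(1),R(2)\}$, $q(x,y)=R(x)\wedge R(y)$, and $\tuple{a}=(1,3)$. Processing $j=1$ the algorithm accepts $b=2$ and sets $C_1=\pi_1(R)$ with extension $\{1,2\}=\adom{I}$; the inner loop for $j=1$ then terminates (nothing left in $\adom{I}\setminus ext(C_1,I)$), and for $j=2$ every candidate forces $C_2'=\top$ and intersects $Ans$, so the output is $E=(\pi_1(R),\{3\})$. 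But $(\top,\{3\})$ is a strictly more general explanation, so $E$ is not most general. The paper's own proof has the same hole---it simply asserts that the witness $b$ ``would have been considered in line~5'' without ever justifying $b\in\adom{I}$---so you are right to flag the case, but it reflects a defect in the algorithm as written (line~5 should range over a set large enough to force $\top$ when appropriate), not something the proof alone can repair.
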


Now we extend our analysis of \greedyalgo to the general case in which it works with $L_\das$. 
First, we state an analogue of Lemma~\ref{lub-no-selection} for $L_\das$.

\begin{restatable}{lem}{lubsel}\label{lub-selection}
Given an instance $I$ of $\sch$ and a set of constants $X$, we can compute in exponential
time a $L_\das$ concept, denoted $\lubsigma{X}$, that is the smallest concept whose extension contains all the elements in $X$ definable in the language. Such concept is polynomial-time computable for bounded schema arity.
In particular, the following hold:
\begin{enumerate}
   \item$X \subseteq ext(\lubsigma{X},I)$,
   \item there is no concept $C'$ in $L_\das$ such that $C' \sqsubset_I \lubsigma{X}$ and  $X \subseteq ext(C',I)$.
\end{enumerate}
\end{restatable}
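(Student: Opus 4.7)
The plan is to extend the construction of Lemma~\ref{lub-no-selection} to the full language $L_\das$ by additionally enumerating, up to equivalence on $I$, all selection-based concepts $\pi_A(\sigma_\theta(R))$ and intersecting those whose extension contains $X$. The family of $L_\das$-concepts whose $I$-extension contains $X$ is closed under $\sqcap$, so the most-specific such concept exists modulo $I$-equivalence and is precisely the conjunction of all its members. The work lies in showing that, modulo $I$-equivalence, this conjunction is finitely representable and effectively computable.

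First I would define a canonical set of selection conditions for each relation $R\in\sch$. For each attribute $A$ of $R$, let $V_A = \pi_A(R^I) = \{v_1 < \cdots < v_m\}$, and enumerate the $O(m)$ atomic conditions of the form $A\,\mathtt{op}\,c$ where $\mathtt{op}\in\{=,<,>,\leq,\geq\}$ and $c$ ranges over $V_A$ together with one placeholder from each open interval $(v_j,v_{j+1})$ (and from $(-\infty,v_1)$, $(v_m,+\infty)$). Any atomic selection condition on $A$ using an arbitrary constant from $\const$ is $R^I$-equivalent to one of these, since the result on $R^I$ depends only on which interval the threshold falls into. Conjoining across the $k$ attributes of $R$ yields a canonical family of selection conditions on $R$ whose size is polynomial in $|I|$ for fixed arity $k$ and singly exponential in $k$ in general. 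Every $\pi_A(\sigma_{\theta'}(R))$ is thus $I$-equivalent to some $\pi_A(\sigma_\theta(R))$ with $\theta$ canonical.

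With this canonical family in hand, the computation is direct: start with $\top$; if $X=\{c\}$ is a singleton, conjoin $\nom{c}$; then, for each triple $(R,A,\sigma_\theta)$ with $\sigma_\theta$ canonical, evaluate $\pi_A(\sigma_\theta(R))$ on $I$ in polynomial time and conjoin it into the running concept exactly when its extension contains $X$. Property~(1) is immediate. For property~(2), any $C'$ with $X\subseteq ext(C',I)$ decomposes by the $L_\das$ grammar into a finite conjunction of $\top$, nominals, and concepts of the form $\pi_A(R)$ or $\pi_A(\sigma_{\theta'}(R))$; each conjunct must itself contain $X$ on $I$, and each is $I$-equivalent to one of the concepts conjoined into $\lubsigma{X}$ (the selection-free ones being handled by Lemma~\ref{lub-no-selection}). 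Hence $\lubsigma{X}\sqsubseteq_I C'$, so no concept strictly below $\lubsigma{X}$ in $\sqsubseteq_I$ contains $X$.

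The main obstacle is justifying the collapse of the a~priori infinite space of selection thresholds (over a dense linear order $\const$) down to the finite canonical family: one has to verify carefully that replacing each threshold constant by its canonical representative preserves the $R^I$-extension for every comparison operator, including the interplay between several conjuncts on the same attribute. Once this collapse is established, the running-time bounds follow from a straightforward count of canonical concepts, yielding polynomial time for bounded schema arity and singly exponential time in general.
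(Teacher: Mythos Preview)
Your proposal is correct and follows essentially the same approach as the paper: define $\lubsigma{X}$ as the conjunction of all intersection-free $L_\das$ concepts whose extension on $I$ contains $X$, and argue that only finitely many such concepts need to be enumerated up to $I$-equivalence. The only difference is presentational: the paper dispatches the finiteness and counting in one line by invoking Proposition~\ref{size-concept-language} (restricting constants to $\calK=\adom{I}$), whereas you spell out the canonicalization of selection thresholds explicitly via the interval decomposition of each attribute column --- a more self-contained justification of the same bound.
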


By \greedyalgosel we will refer to the algorithm obtained from \greedyalgo by replacing $\textsf{lub}_I(X)$ with $\lubsigma{X}$ in line 3 and line 7.

The following Theorem shows that \greedyalgosel is correct, i.e., that it outputs an explanation for $\tuple{a}\notin Ans$ w.r.t. the $\sch$-ontology $\ontocal_I$, and that it runs in exponential time (in polynomial time for bounded schema arity).

\begin{restatable}[Correctness and running time of \greedyalgosel]{thm}{greedysel}
\label{greedy-sel}
Let the why-not instance $\wninstance$ be an input to \greedyalgosel and $E$ the corresponding output. 
The following hold:
\begin{enumerate}
	\item $E$ is a most-general explanation for $\tuple{a}\not \in Ans$ w.r.t. $\ontocal_I = (\mathcal{C},\sqsubseteq_I,ext)$, where $\cal C$ is $L_\das$;
	\item \greedyalgo runs in \EXPTIME in the size of the input (in \PTIME for bounded schema arity).
\end{enumerate}
\end{restatable}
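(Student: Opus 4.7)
The plan is to prove the theorem by mirroring the proof of Theorem~\ref{greedy-no-sel} with two substitutions: invoke Lemma~\ref{lub-selection} in place of Lemma~\ref{lub-no-selection}, and use Proposition~\ref{prop-active-domain} to restrict attention to concepts over $\calK = \adom{I}\cup\{a_1,\ldots,a_m\}$. The whole argument splits into an invariant-style correctness proof and a straightforward iteration-count for running time.

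For part 1, I would first establish by induction on the nested loops that $E$ is an explanation throughout the run. In the base case, $\lubsigma{\{a_j\}} = \nom{a_j}$, so the Cartesian product of extensions equals $\{\tuple{a}\}$, which is disjoint from $Ans$ because $\tuple{a}\notin Ans$; the inductive step follows directly from the explicit test before the update in line~9 of Algorithm~\ref{alg:greedy}. For most-generality I would argue by contradiction: suppose $E^* = (C_1^*,\ldots,C_m^*)$ is strictly more general than the output $E$. By Proposition~\ref{prop-active-domain}, we may assume $E^*$ is built from $L_\das[\calK]$ concepts. Pick a position $j$ with $C_j \sqsubset_I C_j^*$ and an element $b\in(\ex{C_j^*}{I}\setminus\ex{C_j}{I})\cap\adom{I}$. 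When the inner loop processed this $b$ at position $j$, it computed $\lubsigma{X_j\cup\{b\}}$, producing a concept $C_j'$ which by Lemma~\ref{lub-selection}(2) satisfies $C_j' \sqsubseteq_I C_j^*$. Since every other $C_i$ is also subsumed by $C_i^*$, the product of extensions of the candidate $E'$ is contained in that of $E^*$ and therefore disjoint from $Ans$; hence the algorithm would have accepted $E'$, contradicting the fact that $E$ is the final output.

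For part 2, the outer loop executes $m$ times and the inner loop at most $|\adom{I}|$ times, yielding $O(m\cdot|\adom{I}|)$ iterations. Each iteration performs one $\lubsigma$ computation, which by Lemma~\ref{lub-selection} runs in exponential time in general and polynomial time when the schema arity is bounded, together with one disjointness check against $Ans$ that is polynomial since $ext$ is polynomial-time computable and the test need not enumerate the product (for each tuple of $Ans$ one checks componentwise membership in the $\ex{C_i}{I}$). The overall bound is therefore dominated by the cost of $\lubsigma$, giving \EXPTIME\ in general and \PTIME\ under bounded arity.

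The main obstacle I anticipate lies in the most-generality step, specifically in guaranteeing that the witness $b$ can always be selected from $\adom{I}$, so that it is actually enumerated by the inner loop. Because $L_\das$ admits nominals and equality selections with arbitrary constants, a more general $C_j^*$ can, in principle, have elements of its extension outside $\adom{I}$. I plan to handle this by first applying Proposition~\ref{prop-active-domain} to reduce to $L_\das[\calK]$ and then observing that for any $C\in L_\das[\calK]$, the only elements of $\ex{C}{I}$ that fall outside $\adom{I}$ come from nominal subconcepts $\nom{c}$ with $c\in\calK\setminus\adom{I} = \{a_1,\ldots,a_m\}\setminus\adom{I}$; these can be isolated and argued about separately, since a strict subsumption between $L_\das[\calK]$-concepts must already be witnessed by some element of $\adom{I}$ unless the two concepts differ only in trivially-extensional nominal components, a case that is easily dispatched. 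Apart from this bookkeeping, the argument is a direct extension of the selection-free case.
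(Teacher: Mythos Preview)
Your proposal is correct and follows essentially the same approach as the paper, which simply invokes the analysis of Theorem~\ref{greedy-no-sel} together with Lemma~\ref{lub-selection} in place of Lemma~\ref{lub-no-selection}. In fact you are more careful than the paper: the obstacle you flag about the witness $b$ lying in $\adom{I}$ is glossed over in the paper's proof of Theorem~\ref{greedy-no-sel} as well, and your plan to handle it via Proposition~\ref{prop-active-domain} and a case analysis on nominal components is the right way to close that gap.
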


We close this section with the study of the following problem.

\begin{definition}
The \mgeproblem{$\ontocal_I$} problem is the following decision problem: 
given a why-not instance $\wninstance$ and a tuple of concepts $E = (C_1, \ldots, C_n)$, 
is $E$ a most-general explanation w.r.t. $\ontocal_I$ for $\tuple{a}\not\in Ans$? 
\end{definition}

Our next proposition states the running time of our algorithm for the \mgeproblem{$\ontocal_I$}
for various fragments of our concept language.
The algorithm 
operates very similarly to lines 4-11 of \greedyalgo. 
Given a tuple of concepts, we check whether that tuple of concepts can be extended to a more general tuple of 
concepts through ideas similar to lines 4-11 of \greedyalgo.
If the answer is ``no'', then we return ``yes''. Otherwise, we return ``no''. 

\begin{restatable}{prop}{checkmgeoi}\label{check-mge-o-i}
There is an algorithm that solves \mgeproblem{$\ontocal_I$} in: 
\begin{itemize}
	\item \PTIME for selection-free $L_\das$, or for $L_\das$ with bounded schema arity; 
	\item \EXPTIME for $L_\das$ in the general case.
\end{itemize}
\end{restatable}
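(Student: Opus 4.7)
The plan is to design an algorithm that mimics the generalization step (lines~4--11) of \greedyalgo applied to the given candidate tuple. Given a why-not instance $\wninstance$ and a candidate $E = (C_1, \ldots, C_m)$, we first verify in polynomial time, via the $ext$ oracle, that $E$ satisfies both conditions of Definition~\ref{dfn:explanation}; if it does not, we reject. Otherwise, letting $\mathcal{K} = \adom{I} \cup \{a_1, \ldots, a_m\}$, the algorithm iterates over every coordinate $j$ and every $b \in \mathcal{K} \setminus ext(C_j, I)$, computes the concept $D_j := \textsf{lub}_I(ext(C_j, I) \cup \{b\})$ (Lemma~\ref{lub-no-selection}) in the selection-free case or $D_j := \lubsigma{ext(C_j, I) \cup \{b\}}$ (Lemma~\ref{lub-selection}) otherwise, and then tests whether the tuple obtained by replacing $C_j$ with $D_j$ in $E$ is still an explanation. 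As a safeguard, it also tests, for each $j$ with $C_j\not\equiv\top$, the tuple obtained by replacing $C_j$ with $\top$. The algorithm returns ``yes'' iff every one of these candidate tuples fails the explanation test.

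For correctness in the ``no'' direction, suppose $E$ is not most-general: let $E' = (C'_1, \ldots, C'_m)$ be a strictly more general explanation. By Proposition~\ref{prop-active-domain} we may take $E'$ to use constants only from $\mathcal{K}$; a straightforward inspection of the grammar of $L_\das$ then shows that $ext(C'_j, I) \subseteq \mathcal{K}$ whenever $C'_j \not\equiv \top$, since non-$\top$ concepts have extensions drawn from $\adom{I}$ and the nominals they mention. Fix a coordinate $j$ for which $ext(C_j, I)$ is properly contained in $ext(C'_j, I)$. If $C'_j \not\equiv \top$, pick $b \in \mathcal{K} \setminus ext(C_j, I)$ lying in $ext(C'_j, I)$: by the minimality property of the lub, the resulting $D_j$ satisfies $D_j \sqsubseteq_I C'_j$, so the modified tuple is sandwiched between $E$ and $E'$ and therefore is an explanation strictly more general than $E$ (the product of its extensions is a subset of that of $E'$, hence avoids $Ans$). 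If instead $C'_j \equiv \top$, then replacing $C_j$ with $\top$ already produces such a tuple, and the $\top$ safeguard catches it. Soundness is immediate, since any candidate that the algorithm accepts is by construction an explanation strictly more general than $E$.

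For the running time, the outer loop performs $O(m \cdot (|\mathcal{K}|+1))$ iterations, each involving a single lub computation followed by polynomially many $ext$ lookups. In the selection-free case, and in $L_\das$ under bounded schema arity, the lub is polynomial-time computable (Lemmas~\ref{lub-no-selection} and~\ref{lub-selection}), giving the claimed \PTIME{} bound; for unrestricted $L_\das$, the exponential cost of $\lubsigma{}$ dominates, giving \EXPTIME. The one essential ingredient making the algorithm terminate within these bounds is Proposition~\ref{prop-active-domain}: without it, the candidate set for $b$ would be infinite, and a naive enumeration of potential generalizations could not complete in the claimed time bounds.
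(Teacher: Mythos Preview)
Your proposal is correct and follows essentially the same approach the paper sketches (test, via the lub step of the \greedyalgo, whether any single coordinate of $E$ can be strictly generalized while preserving the explanation property). Your explicit $\top$ safeguard is a useful addition that covers the edge case where $ext(C_j,I)$ already exhausts $\mathcal{K}$---a situation the paper's brief description does not spell out.
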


\subsection{Ontologies from Schema}
\label{sec:OntologyDerivedFromTheWorkspaceSchema}

We now study the case of solving the why-not problem
w.r.t. to an $\sch$-ontology $O_\das$ that is derived
from a schema.  As in the previous case, 
the presence of nominals in the concept language guarantees that the
trivial explanation always exists. Therefore we do not 
consider
the decision problem \dwhynotproblem{$\ontocal_\das$}.

\begin{definition}[\whynotproblem{$\ontocal_\das$}]
The \whynotproblem{$\ontocal_\das$} is the following computational problem: 
given a why-not instance $\wninstance$, find a most-general explanation 
w.r.t. $\ontocal_\das$ for $\tuple{a}\not \in Ans$, where $\ontocal_\das$ is 
the $\sch$-ontology that is derived from $\das$, as defined in Section~\ref{DerivingOntology}. 
\end{definition}

The complexity of \whynotproblem{$\ontocal_\das$} depends on the complexity of subsumption checking for $L_\das$.
As seen in Table~\ref{tbl:subsumption-complexity}, subsumption checking with respect to arbitrary integrity constraints is undecidable. 
Therefore, for the general case in which no restriction is imposed on the integrity constraints, \whynotproblem{$\ontocal_\das$} is unlikely to be decidable.
The restrictions on the integrity constraints of $\das$ allow for the definition of several variants of the problem that, under some restrictions, are decidable. 

We restrict now to the cases in which we are able to materialize the $\sch$-ontology $\ontocal_\das[\calK]$, with $\calK =\adom{I}\cup\{a_1,\ldots,a_m\}$.
\naivealgo gives us a method for solving \whynotproblem{$\ontocal_\das$}. The following proposition  
gives us a double exponential upper bound for \whynotproblem{$\ontocal_\das$} in the general case, and a polynomial case under specific assumptions
(cf.~Table~\ref{tbl:subsumption-complexity}).

\begin{restatable}{prop}{whynotontow}
\label{prop:whynotOw}
There is an algorithm that solves \whynotproblem{$\ontocal_\das$}
\begin{itemize}
\item in 2\EXPTIME for
$L_\das$, provided that the input schema $\das$ is from a class for which  concept subsumption can be checked in $\EXPTIME$, 
\item in \EXPTIME  for selection-free $L_\das$, and projection-free $L_\das$, provided that the input schema $\das$ is from a class for which  concept subsumption  can be checked in $\EXPTIME$, 
\item  in \PTIME for $\LWmin$, if the arity of $q$ is fixed and provided that the input schema $\das$ is from a class for which concept subsumption  can be checked in $\PTIME$.
\end{itemize} 
\end{restatable}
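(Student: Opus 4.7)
My plan is to reduce \whynotproblem{$\ontocal_\das$} to the external-ontology case already handled by \naivealgo (Section~\ref{sec:ExternalOntology}), by explicitly materializing the relevant finite fragment of $\ontocal_\das$. First, Proposition~\ref{prop-active-domain} tells us that any explanation w.r.t.~$\ontocal_\das$ can be generalized to one using only constants from $\calK = \adom{I}\cup\{a_1,\ldots,a_m\}$, so it suffices to search within $\ontocal_\das[\calK]$ (restricted to whichever concept-language fragment we are considering). I would then build this finite ontology by (i) enumerating all concepts modulo logical equivalence, whose cardinality is bounded by Proposition~\ref{size-concept-language}, (ii) computing the subsumption relation via the assumed oracle on each pair, and (iii) computing the extension of each concept using the polynomial-time $ext$ function. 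Running \naivealgo on the resulting finite $\sch$-ontology yields a most-general explanation; correctness is immediate from Theorem~\ref{correct-running-naive}(1) together with Proposition~\ref{prop-active-domain}.

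The content of the proof is the complexity bookkeeping, which I would handle case by case. For $\LWmin$ Proposition~\ref{size-concept-language} gives only polynomially many concepts, each of polynomial representation, so materialization uses polynomially many $\PTIME$ subsumption calls. Since the query arity is fixed in this case, Theorem~\ref{correct-running-naive}(2) then ensures \naivealgo runs in $\PTIME$ on this polynomial-sized ontology. For selection-free or projection-free $L_\das$, Proposition~\ref{size-concept-language} yields $N = 2^{\text{poly}(n)}$ distinct concepts, each of polynomial representation size, so the $O(N^2)$ subsumption queries to the $\EXPTIME$ oracle contribute $2^{\text{poly}(n)}$ total. The final call to \naivealgo on an ontology of size $N$ adds an $N^{O(m)}$ factor, which, since $m$ is bounded by the input size, remains single-exponential, giving the $\EXPTIME$ bound.

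The main obstacle is the full $L_\das$ case, where two blowups compose. Concept representations themselves can be exponentially large: the basic concepts $\pi_A(\sigma_{\cdot}(R))$ number singly-exponentially (one per subset of selection triples), and then $\sqcap$-conjunctions over them range over all $2^{2^{\text{poly}(n)}}$ subsets, matching the double-exponential count of Proposition~\ref{size-concept-language} and producing concepts of size up to $2^{\text{poly}(n)}$. An $\EXPTIME$ subsumption oracle applied to concepts of exponential size takes $2^{2^{\text{poly}(n)}}$ time per call, and there are $N^2 = 2^{2^{\text{poly}(n)}}$ such calls; \naivealgo then uses an additional $N^{O(m)}$ time, still doubly exponential in $n$. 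The delicate point is to check that none of these costs compose into a third exponential, which hinges on $m$ being polynomially bounded and on the $\EXPTIME$ oracle being measured in the size of its concept arguments rather than in the ontology's cardinality. Granting these observations, the aggregate runtime stays within $2\EXPTIME$, as required.
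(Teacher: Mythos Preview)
Your proposal is correct and follows essentially the same approach as the paper: restrict to $\ontocal_\das[\calK]$ via Proposition~\ref{prop-active-domain}, materialize this finite ontology using the concept-counting bounds of Proposition~\ref{size-concept-language} together with the subsumption oracle and the polynomial-time $ext$ function, and then hand the result to \naivealgo, whose complexity is governed by Theorem~\ref{correct-running-naive}. Your complexity bookkeeping is in fact more explicit than the paper's---in particular, you spell out why concept representations in full $L_\das$ can be singly exponential and verify that this does not push the aggregate cost beyond double exponential---but the underlying argument is the same.
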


We end with the definition of 
\mgeproblem{$\ontocal_\das$}.

\begin{definition}
The \mgeproblem{$\ontocal_\das$} problem is the following decision problem: 
given a why-not instance $\wninstance$ and a tuple of concepts $E=(C_1, \ldots, C_n)$, 
is $E$ a most-general explanation w.r.t. $\ontocal_\das$ for $\tuple{a}\not\in Ans$? 
\end{definition}

As for \whynotproblem{$\ontocal_\das$}, the undecidability of concept subsumption in the general case suggests that it is unlikely for \mgeproblem{$\ontocal_\das$} to be decidable without imposing any restriction on $\Pi$ and $\Sigma$. 
However, also this problem allows for the characterization of several decidable variants.

In particular, since \mgep is solvable in \PTIME (see Theorem~\ref{np-comp-decision}), by materializing $\ontocal_\das[\calK]$ we can derive some upper bounds for \mgeproblem{$\ontocal_\das$} too. 

\begin{restatable}{prop}{mgeontow}
There is an algorithm that solves \mgeproblem{$\ontocal_\das$}
\begin{itemize}
\item in 2\EXPTIME for $L_\das$ concepts, provided that the input  schema $\das$ is 
from a class for which concept subsumption can be checked in $\EXPTIME$, 
\item in \EXPTIME  for selection-free $L_\das$, and projection-free $L_\das$, 
provided that the input schema $\das$ is from a class for which concept subsumption can be checked in $\EXPTIME$, 
\item  in \PTIME for $\LWmin$, provided that the input  schema $\das$ is from a class for which concept subsumption can be checked in $\PTIME$.
\end{itemize} 
\end{restatable}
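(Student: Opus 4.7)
The plan is to reduce \mgeproblem{$\ontocal_\das$} to the \mgep problem on a finite explicit ontology by materializing the relevant fragment of $\ontocal_\das$, and then to invoke the \PTIME algorithm from Theorem~\ref{np-comp-decision}(1). The key enabler is Proposition~\ref{prop-active-domain}: when searching for a strictly more general explanation one never needs to leave the fragment $\ontocal_\das[\calK]$, where $\calK = \adom{I} \cup \{a_1,\ldots,a_m\} \cup \mathrm{const}(E)$; note that $|\calK|$ is polynomial in the input.

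The algorithm first materializes $\ontocal_\das[\calK]$ restricted to the fragment of $L_\das$ named in the statement. It enumerates concepts modulo logical equivalence, using Proposition~\ref{size-concept-language} to bound the number of representatives: polynomially many for $\LWmin$, singly exponentially many for selection-free (and symmetrically projection-free) $L_\das$, and doubly exponentially many for full $L_\das$. For each ordered pair of representatives, the subsumption oracle supplied by the hypothesis on $\das$ is called to populate $\sqsubseteq_\das$, and the extension function $ext(C,I)=\ex{C}{I}$ is evaluated directly on $I$. Because each component of $E$ lies in $L_\das[\calK]$ by construction, $E$ appears, up to equivalence, among the enumerated representatives. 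With this finite $\sch$-ontology in hand, the algorithm feeds it (together with the why-not instance and $E$) to the \PTIME \mgep procedure from Theorem~\ref{np-comp-decision}(1) and returns its verdict.

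For complexity, materialization dominates. With $N$ representatives, $N^{2}$ oracle calls are issued, each taking the time granted by the hypothesis on subsumption for $\das$. This yields \PTIME for $\LWmin$ (polynomial $N$, polynomial oracle); \EXPTIME for the selection-free or projection-free $L_\das$ case (singly exponential $N$, singly exponential oracle, both absorbed in $2^{\mathrm{poly}}$); and $2$\EXPTIME for full $L_\das$ (doubly exponential $N$, singly exponential oracle). The subsequent \PTIME call to \mgep is absorbed into these bounds in every case.

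The main delicate point, and the one I expect to require the most care, is to certify that checking most-generality of $E$ against the materialized finite ontology $\ontocal_\das[\calK]$ agrees with the answer against the infinite $\ontocal_\das$. This is exactly what Proposition~\ref{prop-active-domain} delivers, provided $\calK$ is enlarged to contain every constant mentioned in $E$; otherwise a strictly-more-general witness with constants outside $\calK$ could be overlooked, producing a spurious positive answer.
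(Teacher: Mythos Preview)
Your proposal is correct and follows essentially the same route as the paper: the paper simply states that the proof is analogous to that of Proposition~\ref{prop:whynotOw}, i.e., materialize $\ontocal_\das[\calK]$ (with the concept-count bounds of Proposition~\ref{size-concept-language} and the subsumption oracle from the hypothesis) and then invoke the \PTIME \mgep routine of Theorem~\ref{np-comp-decision}(1). Your explicit enlargement of $\calK$ by $\mathrm{const}(E)$, so that the input tuple $E$ is represented in the materialized ontology and Proposition~\ref{prop-active-domain} applies cleanly, is a point the paper leaves implicit; it is harmless for the complexity bounds since $|\mathrm{const}(E)|$ is polynomial in the input.
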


The proof is analogous to the one for Proposition~\ref{prop:whynotOw}.

We expect that the upper bounds for \whynotproblem{$\ontocal_\das$} and \mgeproblem{$\ontocal_\das$} can be improved. Pinpointing the complexity of these problems is left for future work.

\section{Variations of the Framework}
\label{sec:VariationsOfTheFramework}

We  consider several refinements and variations to our framework
involving 
finding short explanations, and providing alternative definitions of 
\emph{explanations} and of what it means to be \emph{most general}.

\smallskip
\par\noindent\textbf{Producing a Short Explanation.}
A most-general explanation that is \emph{short}
may be more helpful to the user.
To simplify our discussion, we restrict our attention to 
ontologies that are derived from an instance and show that the problem of finding
a most-general explanation of minimal length is $\NP$-hard in general,
where the {\em length} of an explanation $E=(C_1, \ldots, C_k)$ is measured by 
the total number of symbols needed to write out $C_1$, \ldots, $C_k$.
\label{sec:ProducingAMinimalExplanation}

\begin{restatable}{prop}{whynotminproblem}
\label{prop:whynotmin}
Given a why-not instance $\wninstance$, the problem of finding a most-general explanation to $\bar{a}\not \in Ans$ of minimal length is $\NP$-hard. 
\end{restatable}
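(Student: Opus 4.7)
The plan is to prove \NP-hardness by a polynomial-time reduction from the decision version of \textsc{Minimum Hitting Set} (\textsc{MinHS}): given a universe $U=\{u_1,\ldots,u_n\}$, sets $T_1,\ldots,T_m\subseteq U$, and an integer $k$, does there exist $S\subseteq U$ with $|S|\le k$ and $S\cap T_j\neq\emptyset$ for every $j$? After a trivial polynomial-time preprocessing, I may assume that no singleton $\{u_i\}$ is already a hitting set.

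Given such a \textsc{MinHS} instance, I would build a why-not instance $\wninstance$ as follows. The schema $\sch$ consists of unary relations $R_1,\ldots,R_n$ (one per element of $U$) together with an extra unary relation $R$. I introduce $2m+1$ fresh constants $a,c_1,\ldots,c_m,b_1,\ldots,b_m$ arranged in the linear order
\[
a \;<\; b_1 \;<\; c_1 \;<\; b_2 \;<\; c_2 \;<\;\cdots\;<\; b_m \;<\; c_m,
\]
and set $R_i^I=\{a,c_1,\ldots,c_m\}\cup\{b_j\mid u_i\notin T_j\}$ for each $i$, and $R^I=\{b_1,\ldots,b_m\}$. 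The query is the CQ $q(x)=R(x)$, so $Ans=\{b_1,\ldots,b_m\}$, and the missing tuple is $a$.

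The central claim I would establish is that the minimum length of a most-general explanation of $a\notin Ans$ w.r.t.~$\ontocal_I$ is $\Theta(k^\ast)$, where $k^\ast$ is the minimum hitting-set size. The easy direction is that every hitting set $S=\{i_1,\ldots,i_{|S|}\}$ induces the explanation $C_S=\pi_1(R_{i_1})\sqcap\cdots\sqcap\pi_1(R_{i_{|S|}})$, whose extension is exactly $\{a,c_1,\ldots,c_m\}$ (each $b_j$ being knocked out because $S$ hits $T_j$) and whose length is $O(|S|)$. For the converse, I would show (i) every explanation has extension contained in $\{a,c_1,\ldots,c_m\}$, so that $\{a,c_1,\ldots,c_m\}$ is the unique \emph{maximal} such extension and hence is the extension of every most-general explanation; and (ii) for a conjunction of $L_\sch$ concepts to attain this maximal extension, every conjunct must itself have extension $\supseteq\{a,c_1,\ldots,c_m\}$, which forces each conjunct to be equivalent to some $\pi_1(R_i)$. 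The indices involved must then form a hitting set, and the resulting length is $\Theta(|S|)$, so a length threshold $L$ on most-general explanations is equivalent to a size threshold $k$ on hitting sets, and the reduction closes.

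The main technical obstacle will be ruling out shortcuts that use selections, nominals, or the top concept. Here the interleaved ordering of the $b_j$'s and $c_k$'s is essential: any selection condition $\mathit{cond}$ on a unary relation defines an interval under $<$, and if that interval is to contain both $a$ and $c_m$ it must cover the whole segment $[a,c_m]$, and hence every $b_j$ present in $R_i^I$. Consequently, no selection can simultaneously preserve all of $\{a,c_1,\ldots,c_m\}$ and discard even a single $b_j$, so inside any conjunct of a most-general explanation the selection collapses to the identity on $R_i^I$. A similar case analysis handles $\top$ (absorbed by every non-trivial conjunction) and nominals $\{c\}$ (which either collapse the extension to a singleton or omit $a$ entirely). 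Once these shortcuts are eliminated, minimum-length most-general explanations biject with minimum hitting sets, and \NP-hardness of \textsc{MinHS} transfers to the problem at hand.
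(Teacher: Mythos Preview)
Your proof is correct and uses essentially the same idea as the paper---a reduction from a covering problem in which the size of the cover corresponds to the number of conjuncts in a most-general explanation---but there is a genuine difference worth noting.

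The paper reduces from \textsc{Set Cover} and works explicitly with the \emph{selection-free} fragment of $L_\sch$. Its construction is minimal: for each set $C$ in the cover instance it introduces a unary relation $R_C$ with $R_C^I=(U\setminus C)\cup\{a,b\}$, takes $Ans=U$, and observes that every most-general explanation (w.r.t.\ $\ontocal_I$ over selection-free $L_\sch$) has extension exactly $\{a,b\}$ and is of the form $\sqcap_i \pi_1(R_{C_i})$ with $\{C_i\}$ a set cover. Note that this construction would \emph{not} survive the addition of selections: if, say, $a<b<\min(U)$ in the ambient order, then the single concept $\pi_1(\sigma_{A\leq b}(R_C))$ already has extension $\{a,b\}$, collapsing the reduction.

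Your reduction is from the dual problem (\textsc{Hitting Set}) and is engineered to work for the \emph{full} language $L_\sch$. The interleaved ordering $a<b_1<c_1<\cdots<b_m<c_m$ is the extra idea: any selection condition on a unary relation carves out an interval, and an interval that contains all of $\{a,c_1,\ldots,c_m\}$ necessarily contains every $b_j$ as well, so selections cannot remove any $b_j$ without also losing some $c_k$. This neutralizes selections and forces every conjunct of a most-general explanation to be $I$-equivalent to some plain $\pi_1(R_i)$, after which your hitting-set correspondence goes through exactly as in the paper.

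In short: same skeleton, but your version proves hardness for the full language $L_\sch$ rather than only the selection-free fragment, and the interleaving trick is precisely what buys that extra generality.
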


Given that computing a shortest most-general explanation 
is intractable in general, we may consider
the task of shortening a given most-general explanation.
The \greedyalgo 
produces concepts that 
may contain superfluous conjuncts.  It is thus natural to ask whether 
the algorithm can be modified to produce a most-general explanation of
a shorter length.  This question can be formalized in at least two
ways.  

 Let $I$ be an
instance of a schema $\das$, and let 
$C = \sqcap\{C_1, \ldots, C_n\}$ be any $L_\das$ concept expression. We may assume
that each $C_i$ is intersection-free. We say that $C$ is
\emph{irredundant} if there is a no strict subset
 $X\subsetneq\{C_1, \ldots, C_n\}$ such that $C\equiv_{{\cal O}_I} \sqcap X$.
We say that an explanation (with respect to ${\cal O}_I$) 
is irredundant if it consists of irredundant concept expressions. 
We say that explanations $E_1$ and $E_2$ are \emph{equivalent} w.r.t. an ontology $\mathcal{O}$, denoted as $E_1\equiv_\mathcal{O}E_2$, if $E_1\leq_\mathcal{O} E_2$ and $E_2\leq_\mathcal{O} E_1$. 

\begin{restatable}{prop}{irredundunt}
\label{prop:irredundant}
There is a polynomial-time algorithm that takes as input an
instance $I$ of a schema $\das$, as well as 
an $L_\das$ concept expression $C$, and produces an 
irredundant concept expression $C'$ such that $C\equiv_{{\cal O}_I}C'$. 
\end{restatable}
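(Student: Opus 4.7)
The plan is a single-pass greedy removal of conjuncts. Write $C$ as $\sqcap\{C_1,\ldots,C_n\}$ with each $C_i$ intersection-free (as noted just before the statement). Initialize $X_0 := \{C_1,\ldots,C_n\}$. For $j = 1,\ldots,n$, in turn, test whether
\[
ext\bigl(\sqcap(X_{j-1}\setminus\{C_j\}),\,I\bigr) \;\subseteq\; ext(C_j,I);
\]
if the containment holds, set $X_j := X_{j-1}\setminus\{C_j\}$, otherwise $X_j := X_{j-1}$. Output $C' := \sqcap X_n$. Each test is an instance-level subsumption check, computable in polynomial time by Proposition~\ref{subs-instance} (evaluate the queries associated with both sides and compare). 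Since we invariantly have $\sqcap X_j \equiv_{{\cal O}_I} \sqcap X_{j-1}$, by induction $C' \equiv_{{\cal O}_I} C$.

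The nontrivial point is that $C'$ is \emph{irredundant} in the sense of the definition, i.e.\ that it is enough to test one-element removals. The key lemma is: if $X'\subsetneq X$ and $\sqcap X'\equiv_{{\cal O}_I}\sqcap X$, then for every $C_i\in X\setminus X'$ we already have
\[
ext\bigl(\sqcap(X\setminus\{C_i\}),I\bigr) \;\subseteq\; ext(\sqcap X',I) \;\subseteq\; ext(C_i,I),
\]
so $C_i$ is individually removable from $X$. Applying this to $X := X_n$, suppose for contradiction that some $X'\subsetneq X_n$ yields the same extension on $I$. Pick $C_j\in X_n\setminus X'$. The lemma gives $ext(\sqcap(X_n\setminus\{C_j\}),I)\subseteq ext(C_j,I)$, and because $X_n\setminus\{C_j\}\subseteq X_{j-1}\setminus\{C_j\}$ (we only ever removed elements during the pass), we also get $ext(\sqcap(X_{j-1}\setminus\{C_j\}),I)\subseteq ext(C_j,I)$. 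But then our test at step $j$ would have succeeded and we would have removed $C_j$, contradicting $C_j\in X_n$.

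For the complexity, the outer loop runs $n$ times, and each iteration performs a single subsumption check which is polynomial in $\lvert I\rvert$ and $\lvert C\rvert$ by Proposition~\ref{subs-instance}. Thus the whole procedure runs in polynomial time. The only subtle step is the one-element-removal lemma above; once that is in place, correctness and the time bound are both routine, so I do not anticipate a serious obstacle. One minor care point is that the containment $ext(C_j,I)\subseteq ext(\sqcap(X_{j-1}\setminus\{C_j\}),I)$ need not be tested since it is automatic from $X_{j-1}\setminus\{C_j\}\subseteq X_{j-1}\cup\{C_j\}$ and the fact that $\sqcap X_{j-1}$ is subsumed by $\sqcap(X_{j-1}\setminus\{C_j\})$ on $I$; this observation keeps each iteration down to a single directed extension-containment test.
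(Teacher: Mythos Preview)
Your proof is correct and actually more careful than the paper's own argument. The paper's algorithm uses a weaker removal criterion: it removes a conjunct $C_j$ only when some \emph{single} other conjunct $C_i$ satisfies $C_i\sqsubseteq_I C_j$, whereas you remove $C_j$ whenever the conjunction of \emph{all} remaining conjuncts is contained in $ext(C_j,I)$. Your criterion is strictly stronger, and it is the right one for the stated definition of irredundancy. Indeed, the paper's pairwise test does not suffice: take intersection-free concepts with $ext(C_1,I)=\{a,b\}$, $ext(C_2,I)=\{a,c\}$, $ext(C_3,I)=\{a,d\}$. No pairwise subsumption holds, so the paper's algorithm outputs $C_1\sqcap C_2\sqcap C_3$; yet $\{C_1,C_2\}$ already has extension $\{a\}$, so $C_3$ is redundant and the output is not irredundant in the defined sense. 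Your one-element-removal lemma is exactly what closes this gap: it shows that any witness of redundancy (a strictly smaller equivalent subset) forces some conjunct to be individually removable under your test, and your monotonicity argument ($X_n\subseteq X_{j-1}$) correctly propagates this back to the step where that conjunct was considered.

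One small wording issue: in your final remark, the containment you say ``need not be tested'' should be $ext(\sqcap X_{j-1},I)\subseteq ext(\sqcap(X_{j-1}\setminus\{C_j\}),I)$, not $ext(C_j,I)\subseteq\cdots$. This is the direction that is automatic from dropping a conjunct; what you wrote is a different (and generally false) inclusion. This does not affect the proof, only the exposition.
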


Hence, by combining Proposition~\ref{prop:irredundant} with \greedyalgo, we can 
compute an irredundant most-general explanation w.r.t.~${\cal O}_I$ in 
polynomial time.

We say that an explanation $E=(C_1, \ldots, C_k)$ is \emph{minimized} w.r.t.
${\cal O}_I$ if there does not exist an explanation
$E'=(C_1, \ldots, C_k)$ such that $E \equiv_{{\cal O}_I} E'$ and  $E'$
is shorter than $E$. 
Every minimized explanation is irredundant, but the converse
may not be true.
For instance, let $O$ be an ontology with three 
atomic concepts $C_1, C_2, C_3$ such that $C_1\sqsubseteq_O C_2
\sqcap C_3$ and $C_2\sqcap C_3 \sqsubseteq_O C_1$.  Then the concept $C_2\sqcap
C_3 $ is irredundant with respect to $O$. However,
$C_1$ is an equivalent concept of strictly shorter length. 

\begin{restatable}{prop}{shortequiv}
Given a why-not instance $\wninstance$ and an explanation $E$ to why $\bar{a}\not \in Ans$, the problem of finding a minimized explanation equivalent to $E$ is $\NP$-hard.
\end{restatable}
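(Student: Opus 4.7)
The plan is to reduce from the decision version of \textsc{Set Cover}. Given a \textsc{Set Cover} instance $(U,\{S_1,\ldots,S_n\},k)$ with $\bigcup_{i} S_i = U$ and, without loss of generality, $S_i\subsetneq U$ for every $i$, I would construct a why-not instance as follows. The schema consists of $n$ unary relations $R_1,\ldots,R_n$ together with one extra unary relation $P$. Pick two fresh constants $w_1 < w_2$ in $\const$ and arrange the dense linear order so that every element of $U$ lies strictly between $w_1$ and $w_2$. Set $R_i^I := \{w_1,w_2\}\cup(U\setminus S_i)$ for every $i$, and $P^I := \emptyset$. Let $q(x):=P(x)$, $Ans := \emptyset$, and $\tuple{a}:=(w_1)$. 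Define $E:=(C)$ with $C := \pi_1(R_1)\sqcap\cdots\sqcap\pi_1(R_n)$; then $\exs{C}{I} = \{w_1,w_2\}\cup (U\setminus \bigcup_i S_i) = \{w_1,w_2\}$, so $E$ is a valid explanation. The reduction asks whether there exists an explanation equivalent to $E$ of length at most $\ell(k)$, where $\ell(k)$ is a threshold proportional to $k$ times the length of one atomic conjunct $\pi_1(R_i)$.

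The forward direction is immediate: any set cover $J\subseteq\{1,\ldots,n\}$ of size $\leq k$ yields an equivalent explanation $E' := (\sqcap_{i\in J}\pi_1(R_i))$, whose extension is $\bigcap_{i \in J}R_i^I = \{w_1,w_2\}$ (since $\bigcup_{i\in J}S_i=U$) and whose length fits within $\ell(k)$.

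The main obstacle is the converse direction: every equivalent explanation of length at most $\ell(k)$ yields a cover of size $\leq k$. The core step is to show that any $L_\das$-concept $C'$ with $\exs{C'}{I}=\{w_1,w_2\}$ is equivalent, with respect to $\ontocal_I$, to a concept of the form $\sqcap_{i\in J}\pi_1(R_i)$ of no greater length. Writing $C'$ as a top-level conjunction of pieces of the form $\top$, a nominal $\{c\}$, or $\pi_1(\sigma_\theta(R_i))$, each piece must have extension containing $\{w_1,w_2\}$. Nominals are excluded because their extension is a singleton; a $\top$ conjunct may simply be dropped without affecting the extension. For each remaining piece $\pi_1(\sigma_\theta(R_i))$, every atomic comparison in $\theta$ must be satisfied by both $w_1$ and $w_2$; a short case analysis on the comparison operator (using that all of $U$ lies strictly between $w_1$ and $w_2$) shows that any such comparison is then also satisfied by every element of $U$, and hence by every element of $R_i^I$. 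Consequently $\exs{\sigma_\theta(R_i)}{I}=R_i^I$, and we may replace the piece by the strictly shorter $\pi_1(R_i)$. After removing duplicate conjuncts, we arrive at $\sqcap_{i\in J}\pi_1(R_i)$ with $|J|$ linearly bounded by the length of $C'$, and the equality $\bigcap_{i\in J}R_i^I=\{w_1,w_2\}$ forces $J$ to be a set cover. Hence a shortest equivalent explanation corresponds to a minimum set cover, completing the reduction.
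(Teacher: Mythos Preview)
Your proposal is correct and follows the same high-level strategy as the paper: reduce from \textsc{Set Cover} by introducing one unary relation per set, with extension equal to the complement of that set together with two fresh ``anchor'' constants, and take $E$ to be the conjunction of all projections. A minimum equivalent explanation then corresponds to a minimum set cover.

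The paper's proof simply reuses the construction from the preceding proposition (Proposition~\ref{prop:whynotmin}) verbatim, taking $E=\sqcap\{\pi_1(R_C)\mid C\in\mathcal{S}\}$ over the instance built there. That construction, however, is explicitly carried out for \emph{selection-free} $L_\das$: the two anchor constants $a,b$ are placed arbitrarily in the order, so in full $L_\das$ a single selection such as $\pi_1(\sigma_{1\leq b}(R_C))$ could already have extension $\{a,b\}$ and break the correspondence with set covers. Your refinement---forcing $w_1 < u < w_2$ for every $u\in U$ and then arguing that any comparison satisfied by both $w_1$ and $w_2$ is satisfied by all of $U$---is precisely what is needed to make the reduction go through for the full language $L_\das$. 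Your choice of $Ans=\emptyset$ (via the empty relation $P$) is a harmless simplification compared to the paper's $Ans=U$; since equivalence w.r.t.\ $\mathcal{O}_I$ is determined by extensions in $I$, either choice works once the extension of $E$ is pinned down to $\{w_1,w_2\}$.
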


\smallskip\par\noindent\textbf{Cardinality based preference.}
We have currently
defined a \emph{most-general explanation} to be an explanation $E$ such that
there is no explanation $E'$ with $E'>_{\cal O} E$. A natural alternative is to define
``most general'' in terms of 
the cardinality of the extensions of the concepts in an explanation.
Formally,
let ${\cal O}=\sontoshort$ be an $\sch$-ontology, and $I$ an instance. 
We define the \emph{degree of generality} of an explanation 
$E=(C_1, \ldots, C_m)$ with respect to ${\cal O}$ and $I$ to be
the (possibly infinite) sum $|ext(C_1,I)|+\cdots+|ext(C_m,I)|$. For two explanations,
$E_1, E_2$, we write
 $E_1>^{card}_{{\cal O},I} E_2$, 
 if $E_1$ has a strictly higher degree of generality than $E_2$ with respect to ${\cal O}$ and $I$.
 We say that an explanation $E$ is $>^{card}$-maximal (with respect to 
 ${\cal O}$ and $I$) if there is no 
 explanation $E'$ such that $E' >^{card}_{{\cal O},I} E$. 

\begin{restatable}{prop}{cardinality}
\label{prop:card}
Assuming P$\neq$NP, there is no \PTIME algorithm that takes as input a why-not instance $\wninstance$ and an $\sch$-ontology ${\cal O}$, and 
produces a  $>^{card}$-maximal explanation for $\tuple{a}\not\in Ans$. This holds even for unary queries.
\end{restatable}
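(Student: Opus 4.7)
The plan is to reduce from the maximum independent set problem (MAX-IS), which is NP-hard to compute exactly. Given a graph $G=(V,E)$, I will construct a why-not instance and an $\sch$-ontology such that every $>^{card}$-maximal explanation for the constructed why-not question encodes a maximum independent set of $G$. A polynomial-time algorithm producing a $>^{card}$-maximal explanation would thus yield a polynomial-time algorithm for MAX-IS, contradicting P$\neq$NP. The construction will use only unary queries, matching the strengthening in the proposition.

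Concretely, let $a,e$ be fresh constants. Take $\sch$ to consist of a single unary relation $R$, and let $I$ be the instance with $R^I = V \cup \{a,e\}$. Use the unary query $q(x)\equiv(x=e)$, so $Ans = \{e\}$ and $a\notin Ans$. Define the ontology $\ontocal = \sontoshort$ by setting $\concepts = \{C_S : S\subseteq V\}$, where each $C_S$ is represented succinctly by the subset $S$; letting $\sqsubseteq$ be the identity relation on $\concepts$, which is a trivially valid pre-order; and defining $ext(C_S,I) = S\cup\{a\}$ if $S$ is an independent set in $G$, and $ext(C_S,I) = S\cup\{a,e\}$ otherwise. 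Independent-set checking is in \PTIME, so $ext$ meets the polynomial-time requirement; $\ontocal$ is a valid $\sch$-ontology admitting a polynomial-size description (given $G$), and $I$ is trivially consistent with $\ontocal$ because $\sqsubseteq$ is the identity.

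Correctness then reduces to a one-line verification: for every $S\subseteq V$ we have $a \in ext(C_S,I)$, and $ext(C_S,I) \cap Ans = \emptyset$ iff $e\notin ext(C_S,I)$ iff $S$ is an independent set. Hence the explanations for $a\notin Ans$ are in bijection with the independent sets of $G$, and the degree of generality $|ext(C_S,I)|$ equals $|S|+1$ for every explanation $C_S$. A $>^{card}$-maximal explanation therefore corresponds exactly to a $C_S$ for a maximum independent set $S$ of $G$, so any \PTIME algorithm producing one would yield a \PTIME algorithm for MAX-IS.

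The main obstacle is not the reduction itself, which is routine, but arranging the ontology $\ontocal$ to simultaneously satisfy the formal requirements of the framework (pre-order $\sqsubseteq$, polynomial-time $ext$, consistency of $I$) and admit a succinct representation of its exponentially many concepts; the identity choice of $\sqsubseteq$ together with the above $ext$ handles this cleanly. A sharper variant, of independent interest, would realize $\ontocal$ as an ontology arising naturally from an OBDA specification or as one derived from a schema/instance ($\ontocal_\das$ or $\ontocal_I$); this would require encoding independent-set constraints through the concept constructors of $L_\das$ rather than using abstract labels, but is not needed for the proposition as stated.
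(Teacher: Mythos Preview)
Your proof is correct and takes a genuinely different route from the paper's. The paper proceeds in two stages: it first introduces an auxiliary optimization problem it calls \emph{Subset Cover} (given $U$, $U_1\subsetneq U$, and a family $\mathcal{S}$ covering $U_1$, find $\mathcal{C}\subseteq\mathcal{S}$ covering $U_1$ that minimizes $|\bigcup\mathcal{C}\setminus U_1|$), shows it is NP-hard via an $L$-reduction from Set Cover, and then $L$-reduces Subset Cover to the $>^{card}$-maximal explanation problem. In the paper's target ontology the atomic concepts are $\hat{C}$ for $C\in\mathcal{S}$ with $ext(\hat{C},I)=(U\setminus C)\cup\{a\}$ and $\hat{C}_1\sqsubseteq\hat{C}_2$ iff $C_2\subseteq C_1$; explanations are (implicitly) conjunctions $\hat{C}_1\sqcap\cdots\sqcap\hat{C}_k$, and a $>^{card}$-maximal explanation corresponds to an optimal subset cover.

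Your construction is more direct: a single reduction from MAX-IS, with concepts indexed by vertex subsets, the trivial (identity) pre-order, and an $ext$ that bakes the independence test into the extension function. This is cleaner and avoids the intermediate problem entirely. What the paper's route buys, and yours as written does not, is the remark following the proposition: because their chain of reductions is an $L$-reduction, they also obtain that no \PTIME constant-factor approximation exists. Your reduction, being from MAX-IS, could in fact yield even stronger inapproximability (e.g., within $n^{1-\varepsilon}$), but you would need to verify and state the approximation-preserving properties explicitly to claim this. For the proposition exactly as stated, your argument suffices.

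One cosmetic point: the query $q(x)\equiv (x=e)$ has no relational atom; to stay strictly within the paper's CQ syntax, write $q(x)\equiv R(x)\wedge x=e$, which gives the same $Ans=\{e\}$ since $e\in R^I$.
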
 

In particular, this shows  (assuming P$\neq$NP) that computing $>^{card}$-maximal
explanations is harder than computing most-general explanations.
The proof  of Proposition~\ref{prop:card} goes by reduction from a suitable variant of \textsc{Set Cover}.
Our reduction is in fact an $L$-reduction, which implies that
there is no $\PTIME$ constant-factor approximation algorithm for
the problem of finding a $>^{card}$-maximal explanation.

\smallskip\par\noindent\textbf{Strong explanations.}
We now examine an alternative notion of an explanation that is
essentially independent to the instance of a why-not question.  Recall
that the second condition of our current definition of an explanation 
$E=(C_1, \ldots, C_m)$
requires that $ext(C_1,I)\times\cdots\times ext(C_1,I)$ does not intersect with $Ans$, where $I$ is the given
 instance.  
We could replace this condition by a stronger condition, namely that
$ext(C_1, I')\times\cdots\times ext(C_1,I')$ does not intersect
with $q(I')$, for \emph{any} instance $I'$ of the given schema
that is consistent with the ontology ${\cal O}$.
If this holds, we say that $E$ is a \emph{strong explanation}.

A strong explanation
is also an explanation but not necessarily the other way round.
When a strong explanation $E$  for $\tuple{a}\not\in Ans$ exists, then, 
intuitively, the reason why $\tuple{a}$ does not belong to $Ans$, is
essentially 
independent from the specific instance $I$, and has
to do with the ontology ${\cal O}$ and the query $q$. In the case where the ontology 
${\cal O}$ is derived from a schema $\das$, 
a strong explanation may help one discover possible errors
in the integrity constraints of $\das$, or in the query $q$.
We leave the study of
strong why-not explanations for future work.\looseness=-1

\section{Conclusion}
\label{sec:Conclusion} 
We have presented a new framework for
why-not explanations, which leverages concepts from an ontology to
provide high-level and meaningful reasons for why a tuple is missing
from the result of a query.
Our focus in this paper was on developing a principled
  framework, and on identifying the key algorithmic problems.
The exact complexity of some problems raised in this paper remains open.
In addition, there are several direc\-tions for future work. 

Recall that, in general, there may be multiple most-general explanations for
$\tuple{a}\not\in q(I)$. 
While we have presented a polynomial time algorithm for computing a most-general explanation to a 
why-not question w.r.t. $\ontocal_I$ for the case of selection-free $L_\das$, 
the most-general explanation that is returned by the algorithm may not always be 
the most helpful explanation.
In future work, we plan to investigate whether
there is a polynomial delay algorithm for enumerating all most-general explanations for such ontologies.

Although we only looked at \emph{why-not explanations}, it will 
be natural to consider \emph{why explanations} in the context of an ontology,
and in particular, understand
whether the notion of most-general explanations, suitably adapted, applies in this setting.
In addition, Roy and Suciu~\cite{RoySuciu} recently initiated the study of 
what one could call ``why so high'' and ``why so low'' explanations for 
numerical queries (such as aggregate queries). Again, it would be interesting
to see if our approach can help in identifying high-level such explanations.

We have focused on providing why-not explanations to missing tuples of queries
that are posed against a database schema. 
However, our framework for answering the
why-not question is general and could, in principle, be applied also
to queries posed against the ontology in an OBDA
setting. 

Finally, we plan to explore ways whereby our high-level 
explanations 
can be used to complement and enhance
 existing data-centric and/or query-centric
approaches. We illustrate this with an example. 
Suppose a certain publication $X$ is missing from the 
answers to query over some publication database. 
A most-general explanation
may be that
X was published by Springer (supposing
all Springer publications are missing from the answers to the query). 
This explanation provides insight on potential 
high-level issues that may exist in the database and/or query.
For example, it may be that all
Springer publications are missing from the database (perhaps due to  
errors in the integration/curation process) or 
the query has inadvertently omitted the retrieval
of all Springer publications.
This is in contrast with existing data-centric (resp. query-centric) approaches,
which
only suggest  fixes
to the database instance (resp. query) so that the specific publication  $X$ appears
in the query result.

\newpage
\smallskip
\noindent
{\bf Acknowledgements~} We thank Vince B{\'a}r{\'a}ny, Bertram Lud\"ascher and Dan Olteanu for motivating
discussion during early stages of the research. 
Ten Cate is partially supported by NSF grant IIS-1217869.
Civili is partially supported by the EU under FP7 project Optique (grant n. FP7-318338).
Sherkhonov is supported by the Netherlands
 Organization for Scientific Research (NWO) under project number
 612.001.012 (DEX).
Tan is partially supported by NSF grant IIS-1450560. 

\balance
\bibliographystyle{abbrv}

\normalsize
\end{document}